\newif\ifFull
\newcommand*\patchAmsMathEnvironmentForLineno[1]{%
  \expandafter\let\csname old#1\expandafter\endcsname\csname #1\endcsname
  \expandafter\let\csname oldend#1\expandafter\endcsname\csname end#1\endcsname
  \renewenvironment{#1}%
     {\linenomath\csname old#1\endcsname}%
     {\csname oldend#1\endcsname\endlinenomath}}%
\newcommand*\patchBothAmsMathEnvironmentsForLineno[1]{%
  \patchAmsMathEnvironmentForLineno{#1}%
  \patchAmsMathEnvironmentForLineno{#1*}}%
\def\comment#1{}%
\def\withcomments{%
  \newcounter{mycommentcounter}%
   \def\comment##1{\refstepcounter{mycommentcounter}%
    \ifhmode%
     \unskip%
     {\dimen1=\baselineskip \divide\dimen1 by 2 %
       \raise\dimen1\llap{\tiny
	{-\themycommentcounter-}}}\fi%
     \marginpar[{\renewcommand{\baselinestretch}{0.8}%
       \hspace*{-2em}\begin{minipage}{1.5\marginparwidth}\footnotesize%
[\themycommentcounter]:%
\raggedright ##1\end{minipage}}]{\renewcommand{\baselinestretch}{0.8}%
       \begin{minipage}{1.5\marginparwidth}\footnotesize%
[\themycommentcounter]: \raggedright%
##1\end{minipage}}}%
  }
\renewcommand{\C}{\ensuremath{\mathcal C}}
\newcommand{\U}{\ensuremath{\mathcal U}}
\newcommand{\A}{\ensuremath{\mathcal A}}
\newcommand{\rmin}{r_\textnormal{min}}
\newcommand{\rmax}{r_\textnormal{max}} \newcommand{\eps}{\varepsilon}
\newcommand{\wlg}{without loss of generality\xspace}
\newcommand{\roapprox}{\widetilde{r_o}}
\newcommand{\rcapprox}{\widetilde{r_c}}
\newcommand{\roexact}{\bar{r_o}}
\newcommand{\rcexact}{\bar{r_c}}
\newcounter{condition} \setcounter{condition}{0}
\renewcommand{\thecondition}{\Roman{condition}}
\newcommand{\qednew}{}
\newcommand{\qednew}{\hfill$\square$}
\newcommand{\newcond}[1]{ \ifcsname%
  c@#1\endcsname\textnormal{(\ref{#1})}%
  \else%
  \refstepcounter{condition}\textnormal{(\thecondition)}\newcounter{#1}\label{#1}%
  \fi } \makeatother
\spnewtheorem*{prprty}{Property}{\bfseries}{\itshape}
\newtheorem{theorem}{Theorem}
\newtheorem{lemma}{Lemma}
\theoremstyle{remark}
\newtheorem{property}{Property}
\let\doendproof\endproof
\renewcommand\endproof{~\hfill$\qed$\doendproof}
\title{Recognizing Weighted Disk Contact Graphs}
\author{
  Boris Klemz\footnote{Institute of Computer Science, Freie Universit\"at Berlin, Germany}
  \and
  Martin N\"ollenburg\footnote{Algorithms and Complexity Group, TU Wien, Vienna, Austria}
  \and
  Roman Prutkin\footnote{Institute of Theoretical Informatics, Karlsruhe Institute of Technology, Germany}
} 
\date{}
\author{Boris Klemz\inst{1} \and Martin N\"ollenburg\inst{2} \and Roman Prutkin\inst{2}}  
\authorrunning{Klemz \and N\"ollenburg \and Prutkin}
\institute{Institute of Computer Science, Freie Universit\"at Berlin, Germany
\and
Institute of Theoretical Informatics, Karlsruhe Institute of Technology, Germany}
\begin{document}

\maketitle

\begin{abstract}
  Disk contact representations realize graphs by mapping vertices bijectively to
  interior-disjoint disks in the plane such that two disks touch each
  other if and only if the corresponding vertices are adjacent in the graph.
  Deciding whether a vertex-weighted planar graph can be realized such that the disks' 
  radii coincide with the vertex weights is known to be \NP-hard.
  In this work, we reduce the gap between hardness and tractability by
  analyzing the problem for special graph classes. We show that it
  remains \NP-hard for outerplanar graphs with unit weights and for  stars with arbitrary weights, strengthening the
  previous hardness results. On the positive side, we present constructive linear-time recognition algorithms for caterpillars with unit weights and for embedded stars with arbitrary weights.
\end{abstract} 

\section{Introduction}

A set of disks in the plane is a \emph{disk intersection representation} of a graph $G=(V,E)$ if there is a bijection between $V$ and the set of disks such that two disks intersect if and only if they are adjacent in $G$.
\emph{Disk intersection graphs} are graphs
that have a disk intersection representation; a subclass are \emph{disk contact
graphs} (also known as coin graphs), that is, graphs that have a disk intersection representation with interior-disjoint disks. This is also called a \emph{disk contact representation} (DCR) or, if connected, a circle packing. 
It is easy to see that every disk contact graph is planar and the famous Koebe-Andreev-Thurston circle packing theorem~\cite{Koebe:1936} dating back to 1936 (see Stephenson~\cite{s-cpmt-03} for its history) states that the converse is also true, that is, every planar graph is a disk contact graph.

Application areas for disk intersection/contact graphs include
modeling physical problems like wireless communication
networks~\cite{Hale:1980}, covering problems like geometric facility
location~\cite{Robert:1990, Welzl:1991}, visual representation
problems like area cartograms~\cite{Dorling:1996}
and many more (various examples are given by Clark et
al.~\cite{Clark:1990}). Efficient numerical construction of DCRs has been studied in the past~\cite{Collins:2003, Mohar:1993}. Often, however, one is interested in recognizing disk graphs or generating representations that do not only realize the
input graph, but also satisfy additional requirements. For example,
Alam et al.~\cite{Alam:2014} recently obtained several positive and negative results on the existence of balanced DCRs, in which the ratio of the largest disk radius to the
smallest is polynomial in the number of disks. Furthermore, it might
be desirable to generate a disk representation that realizes a vertex-weighted
graph such that the disks' radii or areas are proportional to the corresponding
vertex weights, for example, for value-by-area circle cartograms~\cite{Inoue:2011}.  Clearly, there exist vertex-weighted planar
graphs that cannot be realized as disk contact representations, and the
corresponding recognition problem for planar graphs is \NP-hard, even if all vertices are
weighted uniformly~\cite{Breu:1998:1}. The complexity of recognizing weighted disk contact graphs for many interesting subclasses of planar graphs remained open. Note that graphs realizable as DCRs with unit disks correspond to 1-ply graphs. This was stated by Di~Giacomo et al.~\cite{DiGiacomo:2015} who recently introduced and studied the ply number concept for graphs.
They showed that internally triangulated biconnected planar graphs admitting a DCR with unit disks can be recognized in~$O(n \log n)$ time.
In this paper we extend the results of Breu and Kirkpatrick~\cite{Breu:1998:1} and show that it remains
\NP-hard to decide whether a DCR with unit disks exists
even if the input graph is outerplanar. Our result holds both for the case that arbitrary embeddings are allowed and the case that a fixed combinatorial embedding is specified. The result for the latter case is also implied by a very recent result by Bowen et al.~\cite{Bowen:2015} stating that for fixed embeddings the problem is \NP-hard even for trees. However, the recognition of trees with a unit disk contact representation remains an interesting open problem if arbitrary embeddings are allowed. For caterpillar-trees we solve this problem in linear time. For vertex weights that are not necessarily uniform we show that the recognition problem is strongly \NP-hard even for stars if no embedding is specified. However, for embedded stars we solve the problem in linear time. Our algorithms partially use the \emph{Real RAM} model, which assumes that a set of basic arithmetic operations (including trigonometric functions and square roots) can be performed in constant time~\cite{Preparata:1985}.

\vspace{-.5ex}
\section{Unit disk contact graphs}
\label{sec:udcg}
\vspace{-1ex}

In this section we are concerned with the problem of deciding whether a given  graph is a \emph{unit disk contact graph} (UDC graph), that is, whether it has a DCR with unit disks. For a UDC graph we also say that it is \emph{UDC-realizable} or simply \emph{realizable}.
It is known since 1998 that recognizing UDC graphs is generally \NP-hard for planar graphs~\cite{Breu:1998:1}, but it remained open for which subclasses of planar graphs it can be solved efficiently and for which subclasses \NP-hardness still holds.
We show that we can recognize caterpillars that are UDC graphs in linear time and construct a representation if it exists (Section~\ref{sub:caterpillars}),  whereas the problem remains \NP-hard for outerplanar graphs (Section~\ref{sub:outerplanar}), regardless whether a combinatorial embedding must be respected or not.

\vspace{-.5ex}
\subsection{Recognizing caterpillars with a unit disk contact representation}\label{sub:caterpillars}

Let $G=(V,E)$ be a caterpillar graph, that is, a tree for which a path remains after removing all leaves. 
Let $P = (v_1, \dots, v_k)$ be this so-called \emph{inner path} of $G$.
On the one hand, it is well known that six unit disks can be tightly packed around one central unit disk, but then any two consecutive outer disks necessarily touch and form a triangle with the central disk. 
This is not permitted in a caterpillar and thus we obtain that in any realizable caterpillar the maximum degree $\Delta \le 5$.
On the other hand, it is easy to see that all caterpillars with $\Delta \le 4$ are UDC graphs as shown by the construction in Fig.~\ref{sfg:deg4cater}. 

\begin{figure}[tb]
	\centering
	\subcaptionbox{\label{sfg:deg4cater}}{\includegraphics[width=0.4\columnwidth,page=2]{}}~~~~~~~~
	\subcaptionbox{\label{sfg:incrementalcater}}{\includegraphics[width=0.4\columnwidth,page=3]{fig/caterpillars}}
	\vspace{-1ex}
	\caption{(a) For $\Delta \le 4$ any caterpillar can be realized. (b) Incremental construction of a DCR. Narrow disks are dark gray  and indicated by an outgoing arrow, wide disks are light gray. }
	\vspace{-2ex}
	\label{fig:caterpillar} 
\end{figure}

However, not all caterpillars with $\Delta = 5$ can be realized.
For example, two degree-5 vertices on~$P$ separated by zero or more degree-4 vertices cannot be realized, as they would again require tightly packed disks inducing cycles in the contact graph. In fact, we get the following characterization.

\begin{lemma}\label{lem:cater}
	A caterpillar $G$ with $\Delta = 5$ is a UDC graph if and only if there is at least one vertex of degree at most 3 between any two degree-5 vertices on the inner path $P$.
\end{lemma}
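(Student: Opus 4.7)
The plan is to prove the two directions of the biconditional separately: a geometric/angular argument for necessity, and an explicit constructive argument for sufficiency.

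For \emph{necessity}, I would argue by contrapositive: suppose some two degree-$5$ vertices $v_i, v_j$ of $P$ have no vertex of degree at most $3$ between them, and, by minimality, assume either $j = i+1$ or that all of $v_{i+1}, \ldots, v_{j-1}$ have degree exactly $4$. Two basic observations drive the argument. First, the five angular gaps around any degree-$5$ vertex sum to $360^\circ$, so each is at most $120^\circ$. Second, each such gap must be \emph{strictly} greater than $60^\circ$, for otherwise the two corresponding non-adjacent neighbors of the vertex would touch and hence be adjacent in the contact graph, contradicting the fact that caterpillars are triangle-free.

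In the base case $j = i+1$, let $\alpha_1, \alpha_2$ (respectively $\beta_1, \beta_2$) denote the two gaps at $v_i$ (respectively $v_{i+1}$) flanking the edge $v_i v_{i+1}$ on the two sides of $P$. The observations above immediately yield $\alpha_1 + \alpha_2 \leq 180^\circ$ and $\beta_1 + \beta_2 \leq 180^\circ$. A short law-of-cosines computation shows that non-overlap of the two side-$s$ neighbors (one of $v_i$ and one of $v_{i+1}$) is equivalent to $\alpha_s + \beta_s \geq 180^\circ$ for each $s \in \{1,2\}$. Summing all four inequalities forces equality throughout, whence each of the three remaining gaps around $v_i$ equals exactly $60^\circ$, contradicting strict positivity. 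For $j > i+1$, I extend the angular analysis through the intermediate chain of degree-$4$ vertices, each of which contributes four gaps summing to $360^\circ$ with each $> 60^\circ$; a brief case analysis on whether its two leaves lie on the same or on opposite sides of $P$ suffices to propagate $\alpha_s + \beta_s \geq 180^\circ$ from $v_i$ all the way to $v_j$ and recover the same contradiction.

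For \emph{sufficiency}, I would construct a DCR incrementally along $P$, placing each inner vertex $v_m$ together with its leaves as a canonical ``block'' whose shape depends on $\deg(v_m)$. A degree-$5$ block is essentially rigid: its three leaves must form a tight $180^\circ$ cluster on one side of $P$, which in turn forces a specific bend of $P$ through the block. A block of degree at most $4$ instead admits angular slack in the placement of its leaves, permitting the outgoing direction of $P$ to be tuned over a nontrivial range, larger for smaller degree. A few elementary trigonometric estimates then show that the cumulative bend contributed by any two consecutive degree-$5$ blocks can always be compensated by a single intermediate block of degree at most $3$, so the hypothesis of the lemma makes the incremental construction globally consistent; see Fig.~\ref{sfg:incrementalcater} for a schematic. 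I expect the main obstacle to be the necessity direction for long chains of intermediate degree-$4$ vertices, where the cyclic arrangement of leaves around each intermediate vertex yields several sub-cases that must be handled uniformly before the angular inequality can be cleanly propagated through the chain.
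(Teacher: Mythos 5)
Your overall strategy mirrors the paper's: necessity via an angular ``narrowness'' invariant that must propagate along the chain of degree-4 vertices between two degree-5 vertices, and sufficiency via the same incremental leftmost-packing construction sketched in Fig.~\ref{sfg:incrementalcater}. The base case $j=i+1$ of the necessity direction is clean and essentially correct: each of the five gaps at a degree-5 vertex must \emph{strictly} exceed $60^\circ$ (tangency of two non-adjacent neighbors would create a triangle), so the two gaps flanking a path edge sum to strictly less than $180^\circ$ at each endpoint, while non-tangency of the two flanking neighbors on each side forces the cross sums to strictly exceed $180^\circ$; the four inequalities are jointly contradictory. (Two small wording issues: the $120^\circ$ bound follows only after invoking the $>60^\circ$ bound on the other gaps, and the final contradiction is with the gaps exceeding $60^\circ$, not with ``strict positivity.'')

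The genuine gap is in the propagation step for $j>i+1$, exactly in the sub-case you defer to ``a brief case analysis'': an intermediate degree-4 vertex $v_k$ whose two leaves lie on the \emph{same} side of $P$. On the leafless side there is a single gap $c$ between the edges to $v_{k-1}$ and $v_{k+1}$, which serves simultaneously as the incoming and the outgoing flanking gap on that side. Writing the four gaps at $v_k$ as $a$ (from $v_{k-1}$ to the first leaf), $e$ (between the leaves), $b$ (from the second leaf to $v_{k+1}$) and $c$, the information inherited from the previous edge is only $a+c>180^\circ$, whereas you need $b+c<180^\circ$, i.e.\ $a+e>180^\circ$ --- and this does not follow. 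Concretely, $a=110^\circ$, $e=61^\circ$, $b=91^\circ$, $c=98^\circ$ satisfies every non-tangency constraint between neighbors of $v_{k-1}$ and neighbors of $v_k$ (take the incoming flanking gaps at $v_{k-1}$ to be $75^\circ$ and $85^\circ$), yet both outgoing flanking gaps at $v_k$ exceed $90^\circ$, so your invariant dies. What actually rules out such configurations is a longer-range interaction: $c<180^\circ$ on the leafless side means the path bends sharply at $v_k$, bringing the center of $D_{k+1}$ to within distance about $3$ of the center of $D_{k-1}$, whereupon any admissible flanking neighbor of $v_{k+1}$ in that wedge overlaps the flanking neighbor of $v_{k-1}$ (in the example the two centers land roughly $0.1$--$0.3$ apart). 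So the induction cannot be closed using only constraints between neighbors of consecutive path vertices; you need either a positional invariant (the paper phrases narrowness as a neighbor of $D_{k-1}$ crossing the common tangent line of $D_{k-1}$ and $D_k$, which is a statement about occupied space rather than a single angle) or explicit non-tangency constraints between disks at tree-distance four. Until this case is closed, the necessity direction is incomplete; the sufficiency sketch is fine and matches the paper's construction.
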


\begin{proof}
	Consider an arbitrary UDC representation of $G$ and let $D_i$ be the disk representing vertex $v_i$ of the inner path $P$. %
	Let $\ell_i$ be the tangent line between two adjacent disks $D_{i-1}$ and $D_i$ on $P$. 
	We say that $P$ is \emph{narrow} at $v_i$ if some leaf disk attached to $D_{i-1}$ intersects $\ell_i$; otherwise $P$ is \emph{wide} at $v_i$. 
	Let $v_i$ and $v_j$ ($i<j$) be two degree-5 vertices on $P$ with no other degree~5 vertices between them.	
	The path $P$ must be narrow at the next vertex $v_{i+1}$, since one of the four mutually disjoint neighbor disks of $D_{i-1}$ except $D_i$ necessarily intersects $\ell_i$. 
	If there is no vertex $v_k$ $(i<k<j)$ with $\deg(v_k) \le 3$ between $v_i$ and $v_j$ we claim that $P$ is still narrow at $v_j$. 
	If $j=i+1$ this is obviously true. 
	Otherwise all vertices between $v_i$ and $v_j$ have degree 4.
	But since the line $\ell_{i+1}$ was intersected by a neighbor of $v_i$, this property is inherited for the line $\ell_{i+2}$ and a neighbor of $v_{i+1}$ if $\deg(v_{i+1})=4$.
	An inductive argument applies.
	Since $P$ is still narrow at the degree-5 vertex $v_j$, it is impossible to place four mutually disjoint disks touching $D_{j}$ for the neighbors of $v_j$ except $v_{j-1}$. 
	
	We now construct a UDC representation for a caterpillar in which any two degree-5 vertices of $P$ are separated by a vertex of degree $\le 3$.
	We place a disk $D_1$ for $v_1$ at the origin and attach its leaf disks \emph{leftmost}, that is, symmetrically pushed to the left with a sufficiently small distance between them. 
	In each subsequent step, we place the next disk $D_i$ for $v_i$ on the bisector of the \emph{free space}, which we define as the maximum cone with origin in $D_{i-1}$'s center containing no previously inserted neighbors of $D_{i-1}$ or $D_{i-2}$. Again, we attach the leaves of~$D_{i}$ in a leftmost and balanced way, see Fig.~\ref{sfg:incrementalcater}.
	For odd-degree vertices this leads to a change in direction of $P$, but by alternating upward and downward bends for subsequent odd-degree vertices we can maintain a horizontal monotonicity, which ensures that leaves of $D_i$ can only collide with leaves of $D_{i-1}$ or $D_{i-2}$.
	In this construction $P$ is wide until the first degree-5 vertex is placed, after which it gets and stays narrow as long as degree-4 vertices are encountered.
	But as soon as a vertex of degree $\le 3$ is placed, $P$ gets (and remains) wide again until the next degree-5 vertex is placed. 
	Placing a degree-5 vertex at which $P$ is wide can always be done. \qednew
\end{proof}

Lemma~\ref{lem:cater} and the immediate observations for caterpillars with $\Delta \ne 5$ yield the following theorem. We note that the decision is only based on the vertex degrees in $G$, whereas the construction uses a Real RAM model.

\begin{theorem}
  \label{thm:ud-caterpillars}
  For a caterpillar $G$ it can be decided in linear time whether $G$ is a UDC graph if arbitrary embeddings are allowed. 
  A UDC representation (if one exists) can be constructed in linear time.
\end{theorem}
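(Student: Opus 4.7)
The plan is to reduce the recognition decision to a handful of purely combinatorial, degree-based checks whose correctness is already established by Lemma~\ref{lem:cater} together with the two paragraphs preceding it, and then to argue that the constructive procedure sketched in the proof of Lemma~\ref{lem:cater} can be executed in linear time on a Real RAM.

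First I would compute the inner path $P$ of $G$ in linear time by a single traversal that peels off the leaves and records the vertices that remain. In the same pass the degree of every vertex (in $G$, not in $P$) is stored. The three cases of the characterization are then dispatched as follows: if any vertex has degree at least $6$, reject, since no disk can have six interior-disjoint unit-disk neighbors without two of them touching; if $\Delta\le 4$, accept, exhibiting the canonical arrangement of Fig.~\ref{sfg:deg4cater}; and otherwise walk once along $P$ and verify that between any two degree-$5$ vertices there is at least one vertex of degree at most $3$. By Lemma~\ref{lem:cater} this last check is both necessary and sufficient, and each of the three steps is clearly linear in~$n$.

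For the construction I would faithfully simulate the incremental procedure from the proof of Lemma~\ref{lem:cater}. Placing the disk $D_i$ requires only the center of $D_{i-1}$, the angular position of the previously inserted neighbor disks of $D_{i-1}$ and $D_{i-2}$, and a single bit recording whether the last odd-degree bend was upward or downward. From this data the bisector of the free cone and the new center can be obtained in constant time on a Real RAM via a constant number of arithmetic and trigonometric operations; the leaves of $D_i$ are then attached in a balanced leftmost fashion, again in constant time per leaf. Summing over the $O(n)$ disks yields linear overall running time.

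The main obstacle is already absorbed into Lemma~\ref{lem:cater}; what remains to check is that the geometric state needed to place each disk can be maintained incrementally, so that no global recomputation is ever triggered. In particular, the alternating upward/downward bend rule that preserves horizontal monotonicity must be tracked locally (a single Boolean together with the current bisector direction), and the "narrow/wide" status of $P$ at the current endpoint — which governs whether a degree-$5$ vertex may still be attached — must be updated in constant time from the degree of the vertex just placed. Once this bookkeeping is in place, correctness follows directly from Lemma~\ref{lem:cater} and linearity from the constant per-disk cost.
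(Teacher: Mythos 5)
Your proposal is correct and follows essentially the same route as the paper: the decision reduces to the purely degree-based characterization of Lemma~\ref{lem:cater} together with the immediate observations for $\Delta\le 4$ and $\Delta\ge 6$, and the construction is the incremental disk placement from that lemma's proof, with constant work per disk on a Real RAM. The paper compresses this into a single remark, so your write-up simply makes the same argument explicit.
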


\subsection{Hardness for outerplanar graphs} \label{sub:outerplanar}

A planar 3SAT formula $\varphi$ is a Boolean 3SAT formula with a set $\U$ of variables and a set $\C$ of clauses such that its \emph{variable-clause-graph} $G_\varphi = (\U \cup \C, E)$ is planar. The set $E$ contains for each clause $c \in \C$ the edge $(c,x)$ if a literal of variable $x$ occurs in $c$. Deciding the satisfiability of a planar 3SAT formula is \NP-complete~\cite{Lichtenstein:1982} and there exists a planar drawing $\mathcal{G}_\varphi$ of $G_\varphi$ on a grid of polynomial size such that the variable vertices are placed on a horizontal line and the clauses are connected in a comb-shaped rectangular fashion from above or below that line~\cite{Knuth:1992}, see Fig.~\ref{fig:highLevelGrid}a. A planar 3SAT formula $\varphi$ is \emph{monotone} if each clause contains either only positive or only negative literals and if $G_\varphi$ has a planar drawing as described before with all clauses of positive literals on one side and all clauses of negative variables on the other side. The 3SAT problem remains \NP-complete for planar monotone formulae~\cite{Lichtenstein:1982} and is called Planar Monotone 3-Satisfiability (PM3SAT). 

\begin{figure}[tb]
\centering
\includegraphics[width=0.99\columnwidth]{}
\caption{Sketch of the grid layout $\mathcal{G}_\varphi$ (a) and high-level structure of the construction of $G'_\varphi$ (b) for the PM3SAT formula $\varphi = (x_1\vee x_2)\wedge (x_1\vee x_2\vee x_3)\wedge (\bar x_1\vee \bar x_2\vee \bar x_3)$.}
\label{fig:highLevelGrid}
\end{figure}

We perform a polynomial reduction from PM3SAT to show \NP-hardness of recognizing (embedded) outerplanar UDC graphs. 
A graph is \emph{outerplanar} if it has a planar drawing in which all vertices lie on the unbounded outer face.
We say that a planar graph $G$ is (combinatorially) \emph{embedded} if we are given for each vertex the circular order of all incident edges as well as the outer face such that a planar drawing respecting this embedding exists. %
For the reduction we create, based on the planar drawing $\mathcal{G}_\varphi$, an (embedded) outerplanar graph $G'_\varphi$ that has a UDC representation if and only if the formula $\varphi$ is satisfiable.

Arguing about UDC representations of certain subgraphs of $G'_\varphi$ becomes a lot
easier, if there is a single unique geometric representation (up
to rotation, translation and mirroring). We call graphs with such a representation
\emph{rigid}. In the following Lemma we state a sufficient condition for rigid UDC structures. Note that all subgraphs of $G'_\varphi$ that we refer to as rigid satisfy this condition.

\newcommand{\LemRigidText}{%
Let $G=(V,E)$ be a biconnected graph realizable as a UDC representation that induces an internally triangulated outerplane embedding of~$G$. Then, $G$ is rigid.   
}
\begin{lemma}\label{lem:rigid}
  \LemRigidText	
\end{lemma}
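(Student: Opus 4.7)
The plan is to show that up to a rigid motion and a reflection, the positions of the disk centers of any UDC representation of $G$ are uniquely determined. I rely on two elementary observations. First, three pairwise tangent unit disks have centers at the vertices of an equilateral triangle of side length exactly $2$; consequently, once the disks of two vertices of a triangular face have been placed, the third disk is pinned to one of just two points, which are reflections of each other across the line through the two placed centers. Second, because $G$ is biconnected and outerplanar, its outer boundary is the unique Hamiltonian cycle of $G$, so the outerplane embedding of $G$ (including the cyclic order of edges around every vertex and the left/right side of every internal edge) is determined by $G$ up to a single global reflection.

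I would then invoke the standard fact that the weak dual of an internally triangulated biconnected outerplane graph is a tree. Rooting this tree at an arbitrary internal face $f_0 = \{v_1, v_2, v_3\}$, I place the disks $D_1, D_2, D_3$ at the vertices of a concrete equilateral triangle of side $2$; this choice absorbs translation, rotation and reflection. I then process the remaining faces in the order of a traversal of the weak-dual tree: a child face $f$ of an already-placed face $f'$ shares an edge with $f'$, so two of its disks are already placed, and the combinatorial embedding, fixed above, forces the third disk of $f$ to lie on the side of the shared edge opposite to the third disk of $f'$. Combined with the first observation, this pins the third disk uniquely. Since every vertex of a biconnected internally triangulated outerplane graph is incident to some internal triangular face, the induction fixes the center of every disk.

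The main obstacle I anticipate is the preliminary check that every UDC representation of $G$ really does induce the internally triangulated outerplane embedding assumed by the lemma, rather than some other planar embedding in which a disk could be geometrically ``surrounded'' by others. Biconnectedness rules out alternative combinatorial embeddings, and the internal triangulation condition prevents geometric surrounding, because an equilateral triangle of side $2$ has inscribed radius only $\sqrt{3}/3 < 1$ and so cannot contain an additional unit disk in its interior. Once this is settled, the inductive geometric argument above directly yields the claimed rigidity.
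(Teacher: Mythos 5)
Your proof is correct in substance but takes a genuinely different route from the paper's. The paper argues bottom-up by induction on $|V|$: it deletes a degree-$2$ vertex $v_r$, uses Menger's theorem together with internal triangulation to show that the edge between $v_r$'s two neighbours is present and that the smaller graph still satisfies all preconditions, and then observes that a unit disk tangent to two touching unit disks has only two candidate centers, one of which is already occupied by the apex of the other triangle incident to that edge. Your top-down traversal of the weak dual tree arrives at the same key geometric fact (two mirror positions for the apex over each shared edge) but avoids the bookkeeping of checking that biconnectivity, outerplanarity and internal triangulation survive vertex deletion; in that sense it is cleaner, and the placement of the root face absorbs the translation/rotation/mirroring freedom exactly as needed.

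One step of your write-up is shaky, though easily repaired. To decide which of the two mirror positions the apex of a child face occupies, you appeal to uniqueness of the combinatorial embedding and assert that ``biconnectedness rules out alternative combinatorial embeddings.'' That assertion is false for graphs that are $2$- but not $3$-connected: already $K_4$ minus an edge, which is maximal outerplanar, admits a second rotation system (one of its triangles drawn inside the other). Your inscribed-radius remark excludes only one particular alternative placement, not all of them, so as written the side-selection step is not fully justified. Fortunately none of this is needed: the apex $w$ of the child face is adjacent to both endpoints $u,v$ of the shared edge, so $D_w$ must sit at one of the two points at distance $2$ from both centers; the parent face's apex disk already occupies one of these points, and two distinct interior-disjoint unit disks cannot share a center, so $D_w$ is forced to the other point. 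This is precisely the blocking argument the paper uses (with the blocking disk supplied by internal triangulation), and with it your propagation along the dual tree goes through for an arbitrary UDC representation of $G$, with no preliminary claim about which embedding that representation induces.
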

\begin{proof}
Let $\mathcal G$ be a UDC representation of $G$ that induces an internally triangulated outerplane embedding of $G$. We show by induction that our hypothesis is true for any natural number~$n=|V|$ of vertices. For the induction base case we consider~$1\le n\le 3$. If~$n=1$, then~$G$ is obviously rigid. Since~$G$ is biconnected we know that $n\neq 2$. If~$n=3$, then~$G$ is a complete graph since $G$ is biconnected. In this case~$G$ is obviously rigid, which concludes the induction base case.

For the induction step, consider any~$n>3$ and assume that our hypothesis holds true for all graphs with at most~$n-1$ vertices. By assumption~$G$ is outerplanar and thus there exists a vertex~$v_r\in V$ with~$\mathrm{deg}(v_r)\le 2$. 
Since~$G$ is also biconnected and~$n>3$ we specifically know that $\mathrm{deg}(v_r)=2$. Let~$v_1,v_2\in V$ be the neighbors of~$v_r$. Removing the disk corresponding to~$v_r$ from~$\mathcal G$ yields a UDC representation~$\mathcal G'$ that realizes the subgraph~$G'=(V',E')$ of~$G$ that is induced by the vertex set~$V'=V\setminus \lbrace v_r\rbrace $. The induced planar drawing of~$\mathcal G'$ is obviously still outerplanar and internally triangulated.

\begin{figure}
  \centering
   \includegraphics[width=0.3\textwidth]{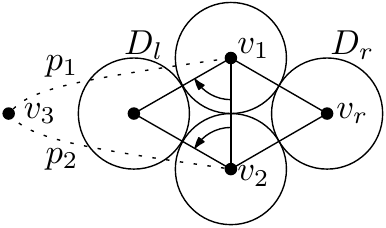}
   \caption{Usage of Menger's Theorem in Lemma~\ref{lem:rigid}.}
  \label{fig:UDTrigid03}
\end{figure}

The following steps are illustrated in Fig.~\ref{fig:UDTrigid03}. The number of vertices of~$G$ is~$n>3$ and~$G$ is biconnected. By Menger's Theorem there exist two internally vertex-disjoint paths~$p_1,p_2$ (via~$v_1$ and~$v_2$ respectively) between~$v_r$ and some vertex~$v_3\in V'$ with~$v_3\neq v_r,v_1,v_2$. The existence of~$p_1$ and~$p_2$ together with~$\mathrm{deg}(v_r)=2$ and with the fact the induced drawing of~$\mathcal G'$ is internally triangulated imply that~$e=\lbrace v_1,v_2\rbrace \in E'$. The existence of~$e$, on the other hand, implies that~$G'$ is biconnected since~$G$ is biconnected and since~$|V'|\ge 3$.~$G'$ and~$\mathcal G'$, therefore, meet all of our preconditions and by our induction hypothesis we know that~$G'$ is rigid since~$|V'|=n-1$.

We now re-insert the vertex~$v_r$ and edges~$\lbrace v_r,v_1\rbrace,\lbrace v_r,v_2\rbrace $ to~$G'$ (resulting in~$G$) and add a corresponding disk~$D_r$ to~$\mathcal G'$. In the following paragraphs we argue that there exists exactly one location where we can place~$D_r$ (namely the same location used in~$\mathcal G$) implying that~$G$ is also rigid since the obtained representation is then congruent to~$\mathcal G$, which concludes the induction step and the proof.

Due to the existence of~$e$, we know that the two disks~$D_1$ and~$D_2$ that correspond to~$v_1$ and~$v_2$ respectively touch each other. Hence, there exist exactly two locations for disks that correspond to common neighbors of both~$v_1$ and~$v_2$. One of these locations is the location of the disk corresponding to~$v_r$ in~$\mathcal G$. It suffices to show that in both~$\mathcal G$ and~$\mathcal G'$ there exists a disk~$D_l$ that occupies the other possible location. 

The existence of~$p_1$ and~$p_2$ implies that the degree of both~$v_1$ and~$v_2$ is at least~$3$ and that there exists a path~$p_3$ between~$v_1$ and~$v_2$ that contains~$v_3$ and that does not contain~$v_r$ or~$e$. Assume, without loss of generality, that in the drawing induced by~$\mathcal G$ vertex~$v_r$ is the neighbor of~$v_1$ that clockwise precedes~$v_2$ and that~$v_r$ is the neighbor of~$v_2$ that clockwise succeeds~$v_1$. Let~$v_1'$ be the neighbor of~$v_1$ that clockwise succeeds~$v_2$ and let~$v_2'$ be the neighbor of~$v_2$ that clockwise precedes~$v_1$. By the fact that the drawing of~$G$ induced by~$\mathcal G$ is internally triangulated and that~$v_r$ has to be adjacent to the outer face of the drawing, we can conclude that~$v_1'=v_2'$, which, therefore, is another common neighbor of~$v_1$ and~$v_2$ whose corresponding disk has to be~$D_l$. \qednew
\end{proof}

The main building block of the reduction is a \emph{wire gadget} in  $G'_\varphi$ that comes in different variations but always consists of a rigid tunnel structure containing a rigid bar that can be flipped into different tunnels around its centrally located articulation vertex. 
Each wire gadget occupies a square tile of fixed dimensions so that different tiles can be flexibly put together in a grid-like fashion. 
The bars stick out of the tiles in order to transfer information to the neighboring tiles. 
Variable gadgets consist of special tiles containing tunnels without bars or with very long bars. Adjacent variable gadgets are connected by narrow tunnels without bars. 
\emph{Face merging} wires work essentially like normal horizontal wires but their low-level construction differs in order to assert that $G'_\varphi$ is outerplanar and connected.
Figure~\ref{fig:highLevelGrid}b shows a schematic view of how the gadget tiles are arranged to mimic the layout $\mathcal G_\varphi$ of Fig.~\ref{fig:highLevelGrid}a. The wires connect the positive (negative) clauses to the left (right) halves of the respective variable gadgets. Furthermore, we place a face merging wire (marked by `M') in the top/bottom left corner of each inner face followed by an \emph{upper (lower) spiral}, which is a fixed $3\times 4$ pattern of wire gadgets. These structures ensure that $G'_\varphi$ is outerplanar and they limit relative displacements.

\begin{wrapfigure}[16]{r}{.31\textwidth}
    \vspace{-2ex}
	\centering
	\includegraphics[scale=.35]{}
	\caption{Variable gadget in  state \emph{false} with a positive (left) and a negative literal (right).}
	\label{fig:UDCvariable}
\end{wrapfigure}
The main idea behind the reduction is as follows. 
Each \emph{variable gadget} contains one thin, long horizontal bar that is either flipped to the left (\emph{false}) or to the right (\emph{true}), see Fig.~\ref{fig:UDCvariable}. 
If the bar is in its left (right) position, this blocks the lower (upper) bar position of the first wire gadget of each positive (negative) literal. 
Consequently, each wire gadget that is part of the connection between a variable gadget and a clause gadget must flip its entire chain of bars towards the clause if the literal is false.  
The design of the \emph{clause gadget} depends on its number of literals. Figure~\ref{fig:clause-T} illustrates the most important case of a clause with three literals containing a T-shaped wire gadget. 
The bar of the T-shaped wire needs to be placed in one of the three incident tunnels. This is possible if and only if at least one of the literals evaluates to \emph{true}. A similar  statement holds true for clauses with two or one literals; their construction is much simpler: just a horizontal wire gadget or a dead end suffice as clause tile.

\begin{figure}[b]
	\centering
	\subcaptionbox{\label{fig:clause-T}}{\includegraphics[width=.4\textwidth]{}}
	\hfill
	\subcaptionbox{\label{fig:hor-wire}}{\includegraphics[width=.5\textwidth]{}}
	\caption{(a) Clause gadget with two false inputs (left and right) and one true input. (b) Detailed view of a horizontal wire gadget with a rigid bar (black disks) inside a tunnel (dark gray disks).}
	\label{fig:gadgets}
\end{figure}

All gadgets are realized by combining several rigid UDC subgraphs. 
As an example, Fig.~\ref{fig:hor-wire} shows a close-up of the left side of a horizontal wire gadget. 
Both the black and the dark gray disks form rigid components whose UDC graphs satisfy the precondition of Lemma~\ref{lem:rigid}.
The black disks implement the bar, the dark gray disks constitute the tunnel.
Note how the bar can be flipped or mirrored to the left or the right around the articulation disk (marked `x') due to the two light gray disks (called \emph{chain} disks) that do not belong to a rigid structure.
The width of each bar is chosen such that it differs from the supposed inner width of a tunnel by at most twice the disk diameter, thus admitting some slack.
However, we can choose the width of the tunnels/bars (and the gadget tile dimensions) as a large polynomial in the input such that this ``wiggle room" does not affect the combinatorial properties of our construction. The description of the face merging wire below discusses this aspect in more detail.
Further, we choose the lengths of the bars such that the bars of two adjacent wire gadgets collide if their bars are oriented towards each other.
Figure~\ref{fig:Twire} depicts a close-up of a T-shaped wire used in the clause gadgets.
\begin{figure}
  \centering
   \includegraphics[width=.95\textwidth]{}
   \caption{The three possible orientations of the bar in a T-shaped wire (turned by $90^\circ$ for space reasons).}
  \label{fig:Twire}
\end{figure}
Unlike the bars of wire gadgets, the bars of variable gadgets are not designed to transmit information from tile to tile.
Instead they are simply designed to prevent the adjacent vertical wires on either the left or the right side of the variable gadget to be oriented towards it.
For this reason, we can choose the width of the variable bars to be very small (e.g., just $2$ disks), in order to obtain an overall tighter construction.

\begin{wrapfigure}[15]{r}{.45\textwidth}
	\centering
	\includegraphics[width=.44\textwidth]{}
	\caption{Schematic of $G'_\varphi$ if face merging wires (marked  `M') replace some regular wires. Inner faces of $\mathcal G_\varphi$ in dark gray, the face `inside the tunnels' in light gray, the outer face in white and the face boundaries in black.}
	\label{fig:overview-face-mergers}
\end{wrapfigure}

Now that we have established how the gadgets work and how they are constructed, consider the properties of the corresponding graph $G'_\varphi$ that encodes the entire structure.
If we would use only the regular wire gadgets as in Fig.~\ref{fig:hor-wire} for the entire construction, $G'_\varphi$ would neither be outerplanar nor connected.
As illustrated in Fig.~\ref{fig:overview-face-mergers}, %
for each of the inner faces of  $\mathcal G_\varphi$ we would obtain a single rigid structure, which we call \emph{face boundary}, with several bars attached to it. These face boundaries, however, would not be connected to each other. Furthermore, the subgraphs that realize the face boundaries would not be outerplanar.
This is why we replace some horizontal wire gadgets in the upper (positive) and lower (negative) part of our construction by \emph{upper} and \emph{lower} face merging wires respectively, which have two purposes. Horizontal wires contain a tunnel that is formed by two face boundaries, called the \emph{upper} and \emph{lower} face boundary of the corresponding gadget tile. These face boundaries are not connected, see Figure~\ref{fig:hor-wire}. In a face merging wire, however, the respective face boundaries are connected. Furthermore, a gap is introduced (by removing two disks) to the lower (upper) face boundary in an upper (lower) face merging wire so that the lower (upper) face boundary now becomes outerplanar.
Since the face merging wire is supposed to transfer information just like a  horizontal wire we cannot connect the two face boundaries rigidly.
Instead we create three bars connected to each other with chain disks, see Fig.~\ref{fig:maintenance}a.
The width of the top and the bottom bars are chosen such that they fit tightly inside the narrow cavity in the middle of the tile if placed perpendicularly to the left or right of the respective articulation disk.
The third bar ensures that all three bars together are placed either to the left (Fig.~\ref{fig:maintenance}b) or to the right (Fig.~\ref{fig:maintenance}a), which allows the desired information transfer.

Together with the incident spiral, a face merging wire ensures that the disks of the lower face boundary deviate from their intended locations relative to the upper face boundary only by up to a small constant distance since (1) the design and the asymmetrical placement of the spirals and the face merging wires preserve the orientations of the respective upper and lower face boundaries, i.e., the left/right/top/bottom sides of these structures are facing as intended in any realization and (2) the width of the tunnels is at most twice the disk diameter larger than the width of the bars and there is at least one bar located in any of the cardinal directions of each spiral.
This effect can cascade since the face boundaries might be connected to further face boundaries.
However, according to Euler's formula the number of faces in $\mathcal G_\varphi$ is linear in the number of clauses and variables and, therefore, the total distance by which a disk can deviate from its intended ideal position is also linear in this number.
By accordingly adjusting the tile dimensions and bar widths, we can therefore ensure that the wiggle room in our construction does not affect the intended combinatorial properties while keeping the size of $G'_\varphi$ polynomial.
The introduction of face merging wires causes $G'_\varphi$ to be connected and it causes all inner faces and the face 'inside the tunnels' to collapse.
Finally, by introducing a single gap in the outermost rigid structure, $G'_\varphi$ becomes outerplanar, which concludes our reduction.

This concludes our construction for the case with arbitrary embeddings. Note, however, that the gadgets are designed such that flipping the bars does not
 require altering the combinatorial embedding of the graph.
This holds true even for the face merging wire.
Therefore, we can
easily provide a combinatorial embedding such that $G'_\varphi$
can be realized with respect to said embedding if and only if
$\varphi$ is satisfiable.
Thus, we obtain the following theorem.

\begin{figure}[tb]
	\centering
		\includegraphics[width=.95\columnwidth]{}
	\caption{Upper face merging wire gadget oriented to the right (a) / left (b). It connects the lower and upper face boundaries. The gap causes the faces inside the tunnel and the lower face to collapse.}
	\label{fig:maintenance}\vspace{-2ex}
\end{figure}

\newcommand{\ThmUdOuterplanarHard}{For outerplanar graphs the UDC recognition problem is \NP-hard if an arbitrary embedding is allowed. The same holds for outerplanar graphs with a specified combinatorial embedding.}
\begin{theorem}
  \label{thm:ud-outerplanar-hard}
  \ThmUdOuterplanarHard
\end{theorem}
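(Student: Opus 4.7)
My plan is a polynomial-time reduction from PM3SAT following the high-level layout already sketched. Given a PM3SAT formula $\varphi$ with its comb-shaped grid drawing $\mathcal{G}_\varphi$, I would build $G'_\varphi$ tile by tile, using variable tiles, horizontal, vertical, and T-shaped wire tiles, face merging wire tiles, upper and lower spiral tiles, and clause tiles. I would then show that $G'_\varphi$ is outerplanar, has polynomial size, and is UDC-realizable if and only if $\varphi$ is satisfiable.

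First I would fix the low-level geometry of the tiles. Choosing a large polynomial tunnel and bar width $W$ together with matching tile dimensions, I would enforce that (i) every bar and every tunnel segment is a biconnected, internally triangulated outerplanar subgraph, so that Lemma~\ref{lem:rigid} certifies it is rigid; (ii) at each articulation disk the two chain disks admit exactly the intended discrete set of bar orientations; (iii) consecutive bars along a wire are long enough that two bars pointed at each other collide, forcing a common orientation to propagate along the entire wire; (iv) a variable bar, when in its \emph{false} or \emph{true} position, blocks precisely the lower or upper first-wire tunnel of each positive or negative literal; (v) the T-shaped clause gadget accommodates its central bar in an incident tunnel if and only if that tunnel's wire is oriented away from the clause. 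The $2$-literal and $1$-literal clauses are handled by a plain horizontal wire or a dead end.

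The main technical step is the combinatorial invariant: in any UDC representation of $G'_\varphi$, every wire has a well-defined two-valued state and these states satisfy the intended PM3SAT-consistent propagation. The key obstacle is controlling the cumulative slack, since each tunnel is up to two disk diameters wider than its bar. I would use the asymmetric spirals together with the face merging wires to pin down the orientation of each face boundary, and invoke Euler's formula to bound the number of inner faces of $\mathcal{G}_\varphi$ by $O(|\U|+|\C|)$; this in turn bounds the total positional drift of any disk by $O(|\U|+|\C|)$. Choosing $W$ as a sufficiently large polynomial in this bound then ensures that the wiggle room never allows a wire to interact with a non-adjacent tile, so the intended combinatorial behavior is preserved. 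This bookkeeping is the part I expect to be by far the most delicate.

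Once the wire states are under control, the equivalence is local and routine. From a satisfying assignment I would place every bar explicitly and read off a UDC representation; conversely, from any realization I would interpret the direction of each variable bar as a truth value, and use the T-shaped clause property to show that every clause is satisfied. Outerplanarity and connectedness of $G'_\varphi$ follow from the face merging wires (which collapse all inner faces as well as the interior tunnel face) together with a single additional gap in the outermost rigid structure. Finally, for the embedded variant, I would publish the combinatorial embedding induced by the intended realization: since flipping any bar, including those of a face merging wire, never changes the cyclic order of edges at any vertex, the same construction shows hardness in the fixed-embedding setting as well.
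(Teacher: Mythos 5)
Your proposal follows the paper's proof essentially verbatim: the same tile-based reduction from PM3SAT with rigid tunnels and bars certified via Lemma~\ref{lem:rigid}, the same variable and T-shaped clause gadgets, the same use of face merging wires, spirals, Euler's formula and polynomially inflated tunnel/bar widths to control cumulative slack and obtain outerplanarity and connectedness, and the same observation that bar flips preserve the combinatorial embedding for the fixed-embedding variant. The final equivalence you outline (a satisfying assignment induces bar orientations realizing $G'_\varphi$, and conversely the variable bars' orientations in any realization induce a satisfying assignment) is exactly the argument in the paper's proof of Theorem~\ref{thm:ud-outerplanar-hard}.
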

\begin{proof}
If the PM3SAT formula $\varphi =(\mathcal U,\mathcal C)$ is satisfiable, there exists a truth assignment $t$ for $\mathcal U$ that satisfies all clauses in $\mathcal C$, i.e., one literal of each clause in $\mathcal C$ is true with respect to $t$. Each literal of a clause $c\in \mathcal C$ corresponds to one of the vertical wires incident to the clause gadget representing $c$. A truth assignment for $\mathcal U$ \emph{induces} an orientation for each of these wires. We orient the bars of true literals towards the respective variable gadget and the false literals towards the clause gadget. Thus, $t$ induces an orientation in which at least one of $c$'s bars is oriented towards a variable gadget, which is a necessary and sufficient condition for the realizability of the clause gadget subgraph of~$c$. Furthermore, a truth assignment \emph{induces} an orientation for the bars in the variable gadgets. We orient such a bar to the right (left) if the corresponding variable is true (false). Recall that all clauses above (below) the horizontal line of variables contain exclusively positive (negative) literals and that the successions of wires that represent these literals are connected to the left (right) side of a variable gadget. Wires of a true literal are oriented towards the respective variable gadget, thus, the bar of the vertical wire incident to the variable gadget sticks into variable gadget. This does not interfere with the realizability of the variable gadgets' subgraphs due to the fact that the orientation of the variable gadgets' bars induced by $t$ is chosen such that these bars only prevent false literal wires to be oriented towards variable gadgets. The false literal wires have to be oriented towards their respective clauses in accordance with the orientation (induced by $t$) of the vertical wires incident to the clause gadgets. Thus, if $\varphi$ is satisfiable, $G'_\varphi$ is realizable. On the other hand, if $G'_\varphi$ is realizable, the orientation of the variable gadgets' bars induces a satisfying truth assignment for $\varphi$ so that $\varphi$ is satisfiable if and only if $G'_\varphi$ is realizable. \qednew
\end{proof}

Note that the \NP-hardness of UDC recognition for \emph{embedded} outerplanar graphs is implied by the recent result of Bowen et al.~\cite{Bowen:2015} showing the \NP-hardness of UDC recognition for embedded trees. It is, however, still open whether Theorem~\ref{thm:ud-outerplanar-hard} extends to trees without a fixed embedding.

\section{Weighted disk contact graphs}

In this section, we assume that a positive weight $w(v)$ is assigned to each vertex $v$ of the graph $G=(V,E)$. The task is to decide whether $G$ has a DCR, in which each disk $D_v$ representing a vertex $v \in V$ has radius proportional to $w(v)$. A DCR with this property is called a \emph{weighted disk contact representation} (WDC representation) and a graph that has a WDC representation is called a \emph{weighted disk contact graph} (WDC graph). Obviously, recognizing WDC graphs is at least as hard as the UDC graph recognition problem from Section~\ref{sec:udcg} by setting $w(v) = 1$ for every vertex $v \in V$. Accordingly, we first show that recognizing WDC graphs is \NP-hard even for stars (Section~\ref{sec:starshard}), however, embedded stars with a WDC representation can still be recognized (and one can be constructed if it exists) in linear time  (Section~\ref{sec:embeddedstars}).

\subsection{Hardness for stars}\label{sec:starshard}

We perform a polynomial reduction from the well-known 3-Partition
problem. Given a bound~$B \in \mathbb N$ and a multiset of positive
integers $\A = \{ a_1, \dots, a _{3n}\}$ such that $\frac{B}{4} < a_i
< \frac{B}{2}$ for all $i = 1, \dots, 3n$, deciding whether~$\A$ can
be partitioned into~$n$ triples of sum~$B$ each is known to be
strongly \NP-complete~\cite{Garey:1990}. Let $(\A,B)$ be a 3-Partition
instance. We construct a star $S = (V,E)$ and a radius assignment
$\mathbf{r}: V \rightarrow \mathbb{R}^+$ such that $S$ has a WDC representation respecting $\mathbf{r}$ if and only if $(\A,B)$
is a yes-instance.

\begin{wrapfigure}[11]{r}{.45\textwidth}
 \centering
  \includegraphics[trim=20mm 2mm 20mm 14mm, clip]{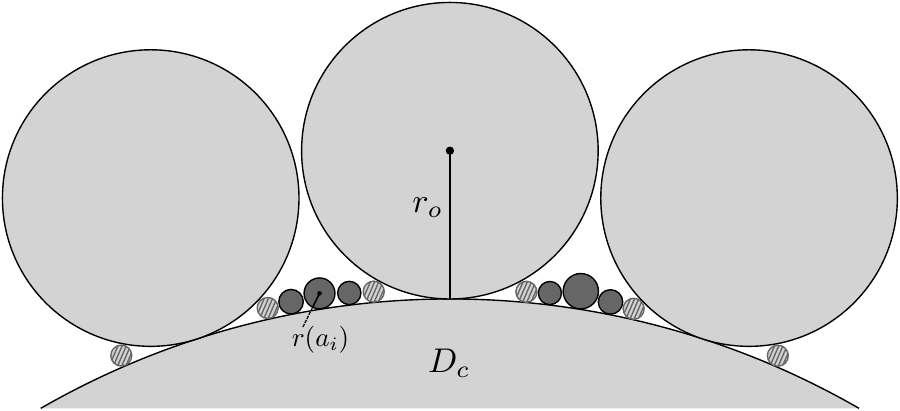}
  \caption{Reducing from 3-Partition to prove
    Theorem~\ref{thm:fr-stars-hard}. Input disks (dark) are
    distributed between gaps. Hatched disks are separators.}
  \label{fig:fr-stars-hard}
\end{wrapfigure}

We create a central disk $D_c$ of radius~$r_c$ corresponding to the
central vertex~$v_c$ of~$S$ as well as a fixed number of outer disks
with uniform radius $r_o$ chosen appropriately such that these disks
have to be placed close together around~$D_c$ without touching,
creating funnel-shaped \emph{gaps} of roughly equal size; see
Fig.~\ref{fig:fr-stars-hard}. Then, a WDC representation of $S$ exists
only if all remaining disks can be distributed among the gaps, and the
choice of the gap will induce a partition of the integers~$a_i \in
\A$. We shall represent each~$a_i$ by a single disk called an
\emph{input} disk and encode~$a_i$ in its radius. Each of the gaps is
supposed to be large enough for the input disks that represent a
\emph{feasible triple}, i.e., with sum $B$, to fit inside it, however, the gaps must be too
small to contain an \emph{infeasible triple}'s disk representation, i.e., a triple with sum $>B$.

While the principle idea of the reduction is simple, the main challenge is finding a radius assignment satisfying the above
property and taking into account numerous additional, nontrivial geometric
considerations that are required to make the construction work. For
example, we require that the lower boundary of each gap is
sufficiently flat. We achieve this by creating additional dummy gaps and
ensure that they cannot be used to realize a previously infeasible
instance.
Next, we make sure that additional
\emph{separator} disks must be placed in each gap's corners to prevent
left and right gap boundaries from interfering with the input
disks. Finally, all our constructions are required to tolerate a
certain amount of ``wiggle room'', since, firstly, the outer disks do
not touch and, secondly, some radii cannot be computed precisely in
polynomial~time. 

Since $S$ is supposed to be a star, the only adjacencies in our
construction are the ones with~$D_c$. However, several of the disks
adjacent to $D_c$ are required to be placed very close together
without actually touching. We shall, whenever we need to calculate
distances, handle these barely not touching disks as if they were
actually touching. We will describe how to compute these distances
approximately; see Lemma~\ref{lem:DTRstarNPH:chooseRad}. During this
step the radius of the central disk increases by a suitably small
amount such that no unanticipated embeddings can be created.

Let~$B > 12$ and~$n>6$, and let~$m \geq n$ be the number of gaps in
our construction. In the \emph{original} scenario described above, a
gap's boundary belonging to the central disk~$D_c$, which we call the
gap's \emph{bow}, is curved as illustrated in
Fig.~\ref{fig:theom:DTRstarNPH:originalGap}. We will, however, first
consider a \emph{simplified} scenario in which a gap is created by
placing two disks of radius~$r_o$ right next to each other on a
straight line as depicted in
Fig.~\ref{fig:theom:DTRstarNPH:SimplifiedGap}. We refer to this gap's
straight boundary as the \emph{base} of the gap. We call a point's
vertical distance from the base its \emph{height}. We also utilize the
terms \emph{left} and \emph{right} in an obvious manner. Assume for
now that we can place two \emph{separator} disks in the gap's left and
right corner, touching the base and such that the distance between the
rightmost point~$p_l$ of the left separator and the leftmost
point~$p_r$ of the right separator is exactly~$12$ units. We can
assume~$B \equiv 0\mod 4$; see Lemma~\ref{lem:DTRstarNPH:addGaps}.
Thus, we know that~$a\in \lbrace B/4+1,\dots ,B/2-1\rbrace $ for
any~$a\in \A$.

\newcommand{\lemDTRstarNPHaddGapsText}{For each~$m \geq n$, there
  exists a 3-Partition instance $(\A', B')$ equivalent to~$(\A, B)$
  with~$|\A'|=3m$ and~$B' = 180B$.}
\begin{lemma}
  \label{lem:DTRstarNPH:addGaps}
  \lemDTRstarNPHaddGapsText
\end{lemma}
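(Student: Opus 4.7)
The plan is to obtain $(\A',B')$ from $(\A,B)$ in two steps: first I would scale, then pad with carefully chosen dummy triples. The scaling factor $180$ is dictated by the arithmetic of the padding step.

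First, I would let $\widehat \A := \{ 180 a : a \in \A\}$ and $B' := 180 B$. This is an equivalent 3-Partition instance of size $3n$, since 3-partitions are preserved under uniform scaling, and each scaled element lies strictly in $(B'/4,B'/2)=(45B,90B)$.

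Next, I would pad with $m-n$ dummy triples using three distinct types
\[
D_1 = 45B+1,\qquad D_2 = 45B+2,\qquad D_3 = 90B-3.
\]
For $B>12$ each $D_i$ lies strictly in $(45B,90B)$, and $D_1+D_2+D_3 = 180B$. Define $\A'$ to be the multiset consisting of $\widehat\A$ together with $m-n$ copies of each $D_i$, so that $|\A'|=3m$. The forward direction is then immediate: given a 3-partition of $\A$, scale it by $180$ and append $m-n$ copies of the dummy triple $\{D_1,D_2,D_3\}$.

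The heart of the proof, and the main obstacle, is the reverse direction: I need to show that in any 3-partition $\Pi$ of $\A'$, each triple $T\in \Pi$ either contains no dummy or equals $\{D_1,D_2,D_3\}$. I would handle this by a residue argument modulo $45$. Every scaled original $180 a$ satisfies $180 a\equiv 0\pmod{45}$ and $B'=180B\equiv 0\pmod{45}$, while $D_1\equiv 1$, $D_2\equiv 2$, and $D_3\equiv -3\pmod{45}$. Writing $(k_1,k_2,k_3)$ for the multiplicities of the dummy types in $T$, the identity $\sum T = B'$ forces
\[
k_1 + 2k_2 - 3k_3 \equiv 0\pmod{45},\qquad k_i\ge 0,\qquad k_1+k_2+k_3\le 3.
\]
Since $k_1 + 2k_2 - 3k_3$ lies in $[-9,6]$ under these constraints, the congruence collapses to the equation $k_1+2k_2-3k_3=0$, whose only admissible solutions are $(0,0,0)$ and $(1,1,1)$. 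Consequently $T$ is either a triple of three scaled originals summing to $180B$ (which descales to an original triple of sum $B$) or exactly the dummy triple. Since $\A'$ contains the same number $m-n$ of copies of each $D_i$, this forces $m-n$ dummy triples in $\Pi$, and the remaining $n$ triples, after dividing each element by $180$, yield a 3-partition of $\A$. The choice of offsets $+1,+2,-3$ is what makes the reverse direction go through cleanly, and explains why the scaling factor $180$ (divisible by $45$) suffices.
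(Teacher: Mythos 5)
Your proof is correct and follows essentially the same strategy as the paper: scale by $180$, pad with $m-n$ dummy triples whose elements have nonzero residues modulo a divisor of $180B$, and use the residue constraint to show that dummies can only recombine into the intended dummy triple. The paper uses two dummy values ($2(m-n)$ copies of $60B-5$ and $m-n$ copies of $60B+10$, arguing modulo $60$) where you use three distinct values and argue modulo $45$, but this is an inessential variation of the same idea.
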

\begin{proof}
  Let $n'=m-n$. For each~$a_i \in \A$, we add~$180 a_i$
  to~$\A'$. Additionally, we add~$2 n'$ integers with value~$60B-5$
  and~$n'$ integers with value~$60B+10$. The resulting
  instance~$(\mathcal A',B')$ can be realized if and only if the
  original 3-Partition instance~$(\mathcal A,B)$ is a
  yes-instance. Clearly, if~$(\mathcal A,B)$ is a yes-instance,
  then~$(\mathcal A',B')$ is a yes-instance. For the opposite
  direction, let~$S$ be a solution for~$(\mathcal A',B')$ and assume
  that~$(\mathcal A,B)$ is a no-instance. Then, there exists a
  triple~$t$ of integers in~$S$ that contains either one or two
  integers of~$\A'\setminus \{a = 180 a_i \mid a_i \in \A\}$. The sum
  of the integers in~$t$ is $5$, $10$, $20$, $50$ or $55\mod 60$
  contradicting to~$S$ being a solution for~$(\mathcal A',B')$ since
  $B'\equiv 0\mod 60$.  \qednew
\end{proof}

Our first goal is to find a function~$r:\lbrace B/4,B/4+1,\dots
,B/2\rbrace \rightarrow \mathbb{R}^+$ that assigns a disk radius to
each input integer as well as to the values~$B/4$ and~$B/2$ such that
a disk triple~$t$ together with two separator disks can be placed on
the base of a gap without intersecting each other or the outer disks
if and only if~$t$ is feasible. In the following, we show that
$r(x)=2-(4-12x/B)/B$ will satisfy our needs. We choose the radius of
the separators to be~$\rmin=r(B/4+1)=2-(1-12/B)/B$, the smallest
possible input disk radius. The largest possible input disk has
radius~$\rmax=r(B/2-1)=2+(2-12/B)/B$.  Note that~$r$ is linear and
increasing.

\begin{figure}[tb]
	\hfill
	\subcaptionbox{\label{fig:theom:DTRstarNPH:originalGap} original scenario}{\includegraphics[page=1]{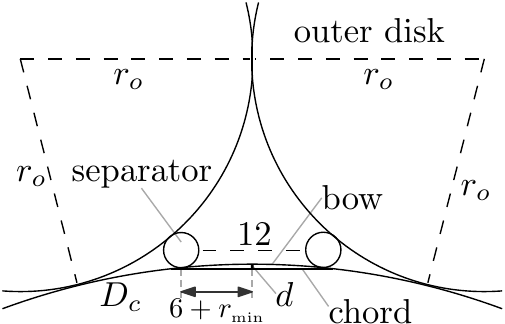}}
	\hfill
	\subcaptionbox{\label{fig:theom:DTRstarNPH:SimplifiedGap} simplified scenario}{\includegraphics[page=2]{fig/DTRstarNPGap.pdf}}
	\hfill\null
\caption{A gap, bounded in 
  \protect(\subref{fig:theom:DTRstarNPH:originalGap}) 
  by two outer disks and a bow; in 
  \protect(\subref{fig:theom:DTRstarNPH:SimplifiedGap}) the gap's base
  replaces its bow. The distance between the separators is~$12$ in
  both scenarios.}\vspace{-2ex}
\end{figure}

Next, we show for both scenarios that separators placed in each gap's
corners prevent the left and right gap boundaries from interfering
with the input disks.

\newcommand{\LemDTRstarNPHUpperBoundaryText}{For any~$a\in \A$ it is
  not possible that a disk with radius~$r(a)$ intersects one of the
  outer disks that bound the gap when placed between the two
  separators.}

\begin{lemma}
  \LemDTRstarNPHUpperBoundaryText
 \label{lem:DTRstarNPH:UpperBoundary}
\end{lemma}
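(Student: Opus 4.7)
The plan is to reduce the claim to the classical Ford-circle identity on a common line: if two disks of radii $r_1$ and $r_2$ both rest on (i.e., are tangent from above to) a horizontal line $L$ and are externally tangent to each other, then the horizontal distance between their tangent points on $L$ equals $2\sqrt{r_1 r_2}$, and it is strictly larger if the two disks are disjoint. By symmetry it is enough to deal with the left outer disk, so I would set up coordinates with the base as the $x$-axis and the left outer disk tangent to the base at the origin (center $(0,r_o)$). Then the left separator, being tangent to both the base and the left outer disk, has its base-tangent-point at $(2\sqrt{r_o\,r_{\min}},0)$.

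For an input disk of radius $r(a)$ placed between the two separators and resting on the base, the non-overlap condition with the left separator forces its base-tangent-point to lie at horizontal distance at least $2\sqrt{r_{\min}\,r(a)}$ to the right of the separator's base-tangent-point, hence at horizontal distance at least $2\sqrt{r_o\,r_{\min}}+2\sqrt{r_{\min}\,r(a)}$ from the origin. If the input disk intersected the left outer disk, the same identity would force that same distance to be strictly smaller than $2\sqrt{r_o\,r(a)}$. Dividing through by $2\sqrt{r_o\,r_{\min}\,r(a)}$, the claim therefore reduces to verifying
\[
\frac{1}{\sqrt{r(a)}} + \frac{1}{\sqrt{r_o}} \;\geq\; \frac{1}{\sqrt{r_{\min}}}.
\]

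The left-hand side is monotonically decreasing in $r(a)$, so it suffices to take $r(a)=r_{\max}$. The radius $r_o$ is pinned down by the separator-gap condition $p_r-p_l=12$: writing the two separator-center positions in closed form and solving a short quadratic yields $\sqrt{r_o}=\sqrt{r_{\min}}+\sqrt{6+2r_{\min}}$. Substituting this together with the explicit formulas $r_{\min}=2-(1-12/B)/B$ and $r_{\max}=2+(2-12/B)/B$ reduces the inequality to a short numerical estimate that is easily seen to hold for all $B>12$: the $1/\sqrt{r_o}$ term contributes a positive constant (roughly $1/(\sqrt{2}+\sqrt{10})\approx 0.22$), whereas the deficit $1/\sqrt{r_{\min}}-1/\sqrt{r_{\max}}$ is only of order $1/B$. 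I expect this last arithmetic step to be the only delicate part of the argument; the conceptual content of the proof lies entirely in chaining the tangent-point identity $2\sqrt{r_1 r_2}$ along the base.
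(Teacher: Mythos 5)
Your reduction to the Ford-circle identity is elegant, but it has two genuine gaps. First, the identity $2\sqrt{r_1 r_2}$ for the horizontal offset of tangent points requires all three disks (outer disk, separator, input disk) to rest on a common straight line. That holds only in the paper's \emph{simplified} scenario; the lemma is needed (and is proved in the paper) for the \emph{original} scenario, in which the outer disks are tangent to the central disk $D_c$, each separator is tangent to $D_c$ and to an outer disk, and the input disks sit on the curved bow rather than on a straight base. Your argument never addresses this curved configuration, and the simple tangent-point identity fails there; you would need either an inversion/Descartes-type generalization involving the curvature of $D_c$, or an explicit argument that the flat case is the worst case. The paper instead argues directly in the original scenario by bounding from below the aperture angle $\alpha$ at the separator (between its tangents at the touching points with the outer and central disks) and comparing it with the tilt angle $\delta$ of an $\rmax$-disk leaning against an $\rmin$-separator.

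Second, your claim that $r_o$ is ``pinned down by the separator-gap condition'' with $\sqrt{r_o} = \sqrt{\rmin} + \sqrt{6 + 2\rmin}$ (giving $r_o \approx 21$) is incorrect: in the simplified scenario the $12$-unit gap condition fixes the \emph{spacing} of the two outer disks, not their radius, and in the original scenario $r_o$ is determined jointly by the gap condition and the number $m$ of outer disks packed around $D_c$, growing as $m$ decreases. Since $1/\sqrt{r_o}$ enters your final inequality positively, what you actually need is an \emph{upper} bound on $r_o$ valid for all admissible $m$; establishing such a bound ($r_o \le 38$, coming from the extreme case $m=6$) is the entire first half of the paper's proof and is absent from yours. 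Numerically your final inequality would still survive with the correct bound ($1/\sqrt{38}\approx 0.16$ against a deficit $1/\sqrt{\rmin}-1/\sqrt{\rmax}=O(1/B)$), so the skeleton is salvageable for the flat case, but as written the proof is incomplete on both counts.
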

\begin{proof}
  First, we utilize a geometric construction to show that~$r_o^u=38$
  is an upper bound for the outer disks' radius~$r_o$ and then use
  this result to prove that even disks with radius~$\rmax$ placed in a
  gap right next to a separator do not intersect an outer disk in the
  original scenario, implying that the input disks can actually be
  placed inside the gaps.

\begin{figure}[tbp]
  \centering
   \includegraphics[width=0.7\textwidth]{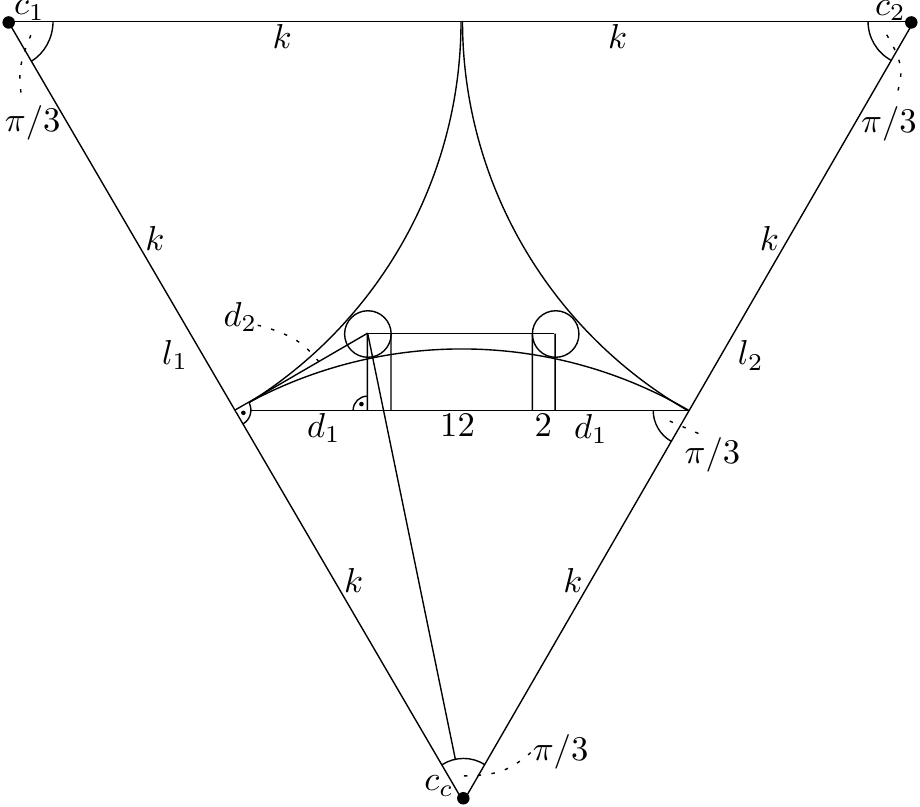}
   \caption{The geometric construction for the proof of
     Lemma~\ref{lem:DTRstarNPH:UpperBoundary}.}
  \label{fig:theom:DTRstarNPH:UpperBoundForRo}
\end{figure}

Let the distance between the two separator disks always remain~12 as
in Fig.~\ref{fig:theom:DTRstarNPH:SimplifiedGap}. For fixed~$\rmin$,
if the number of gaps~$m$ decreases, then~$r_c$ decreases and~$r_o$
increases. For fixed~$m$, if~$\rmin$ increases, so does~$r_o$.
We designate a minimum value of~$m_\textnormal{min}=6$ to~$m$ and
observe that~$\rmin=r(B/4+1)=2-(1-12/B)/B<2$ for any~$B>12$. Consider
the extreme case~$m=6$ and~$\rmin=2$ in
Fig.~\ref{fig:theom:DTRstarNPH:UpperBoundForRo}. The angle between the
two line segments~$l_1,l_2$ bounded by the centers~$c_1,c_2$ of two
adjacent outer disks and the center~$c_c$ of the central disk~$D_c$
is~$2\pi /m_\textnormal{min}=\pi/3$ and, therefore, $l_1$ and~$l_2$
together with the line segment~$\overline{c_1c_2}$ constitute an
equilateral triangle. This implies that the outer disk radius is equal
to the radius of the central disk, we denote this radius with~$k$. By
using basic trigonometry as well as the Pythagorean Theorem, we obtain
that~$k$ has to satisfy the equality~$k=d_1+2+12+2+d_1=2d_1+16$,
where~$d_1=\mathrm{cos}(\pi/6)\cdot d_2=\sqrt{3}\cdot d_2/2$
and~$d_2=\sqrt{(k+2)^2-k^2}=\sqrt{4k+4}=2\sqrt{k+1}$. This set of
equalities solves for~$k=22+4\sqrt{5}\sqrt{3}=37.4919...<38=r_o^u$.

\begin{figure}[htb]
  \centering
   \includegraphics{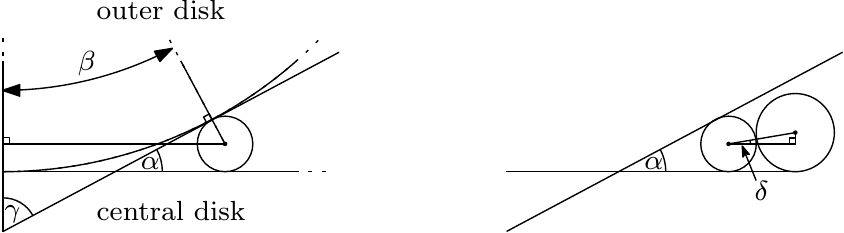}
   \caption{A disk with radius~$\rmax$ can always be placed right next
     to a separator in a corner of a gap.}
  \label{fig:theom:DTRstarNPH:rMax}
\end{figure}

We now show that a disk with radius~$\rmax$ placed in a gap right next
to a separator does not intersect an outer disk. Assume the separator
always touches both the outer and the central disk, and consider the
angle~$\alpha$ between the two tangents on the separator in the two
touching points. For fixed~$r_o$ and~$r_c$, if~$\rmin$ decreases, so
does~$\alpha$. For fixed~$\rmin$ and~$r_o$, if~$r_c$ increases,
then~$\alpha$ decreases. Similarly, for fixed~$\rmin$ and~$r_c$,
if~$r_o$ increases, then~$\alpha$ decreases. Thus, to compute the
lower bound for~$\alpha$, we use upper bounds~$r_c^u$ and~$r_o^u$
for~$r_c$ and~$r_o$ respectively and a lower bound~$\rmin^l$
for~$\rmin$. We will show that a disk with radius~$\rmax$ always fits
inside~$\alpha$ right next to the separator. A suitable choice for
these values is $r_c^u = \infty$, $r_o^u = 38$ and~$\rmin^l = 2 -
\frac{1}{12}$.

To compute the value of~$\alpha$ corresponding to these values,
consider Fig.~\ref{fig:theom:DTRstarNPH:rMax}. It holds: $\beta =
\arccos((r_o^u-\rmin^l)/(r_o^u+\rmin^l))$, $\gamma = \pi/2 - \beta$
and~$\alpha =\pi/2 - \gamma \approx 25.3^\circ$. Recall that~$\rmax^u
= 2 + 1/6$ is an upper bound for~$\rmax$. Assume disks with
radii~$\rmin$ and~$\rmax$ are placed next to each other on a
horizontal line. Then, for the angle~$\delta$ in
Fig.~\ref{fig:theom:DTRstarNPH:rMax} it holds: $\delta =
\arcsin((\rmax - \rmin)/(\rmax + \rmin)) \leq \arcsin((\rmax^u -
\rmin^l)/(\rmax^u + \rmin^l)) \approx 3.51^\circ < \alpha/2$. It
follows that the center of the bigger disk lies below the bisector
of~$\alpha$. Therefore, the bigger disk fits
inside~$\alpha$. \qednew
\end{proof}

For our further construction, we need to prove the following property.
\newcommand{\PrDTRstarNPHfitIffFeasibleText}{Each feasible triple fits
  inside a gap containing two separators and no infeasible triple
  does.}

\begin{property}
  \PrDTRstarNPHfitIffFeasibleText
  \label{pr:DTRstarNPH:fitIffFeasible}
\end{property}
It can be easily verified that for~$x_1$,$x_2$,$x_3$, $\sum_{i=1}^3
x_i \leq B$, it is $2 \sum_{i=1}^3 r(x_i) \leq 12$, implying the first
part of Property~\ref{pr:DTRstarNPH:fitIffFeasible}. We define
$s_i=2\rmin+2\sqrt{(\rmax+\rmin)^2-(\rmax-\rmin)^2}$. In the proof of
Lemma~\ref{lem:DTRstarNPH:simplified:triples}, we will see that~$s_i$
is the horizontal space required for the triple~$(\rmin,\rmax,\rmin)$,
which is the narrowest infeasible triple.
Next, let
$d(\varepsilon,x)=\sqrt{(r(x)-\varepsilon/2)^2+(r(x)-\rmin)^2}$
for~$\eps > 0$ and $x \in \{ B/4+1,\dots
,B/2-1\}$. %
We will see that~$d(\varepsilon,x)$ is an upper bound for the distance
between the center of a disk~$D(x)$ with radius~$r(x)$ and the
rightmost (leftmost) point of the left (right) separator disk, if the
overlap of their horizontal projections is at least~$\eps/2$%
.
\newcommand{\lemDTRstarNPHSimplifiedTriplesText}{There exist $\eps >0$
  and~$\eps_1, \eps_2, \phi \ge 0$ with~$\eps=\eps_1+\eps_2$ which
  satisfy the two
  conditions:\newcond{eq:DTRstarNPH:Cond01}~$12+\varepsilon \le s_i$
  and\newcond{eq:DTRstarNPH:Cond02}~$d(\varepsilon_1,x)\le r(x)-\phi
  \forall x\in \lbrace B/4+1,\dots ,B/2-1\rbrace$.
  These conditions imply the second part of
  Property~\ref{pr:DTRstarNPH:fitIffFeasible} for the simplified
  scenario.}
\begin{lemma}\label{lem:DTRstarNPH:simplified:triples}
\lemDTRstarNPHSimplifiedTriplesText
\end{lemma}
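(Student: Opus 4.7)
I split the proof into two parts: (a) an existence argument producing parameters $\eps,\eps_1,\eps_2,\phi$ satisfying (I) and (II), and (b) a geometric argument showing that once (I) and (II) hold, no infeasible triple can coexist with the two separators in the simplified gap.

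\textbf{Existence.} For (I), I substitute the closed forms $\rmin = 2-(1-12/B)/B$ and $\rmax = 2+(2-12/B)/B$ into $s_i = 2\rmin + 4\sqrt{\rmin\rmax}$ and expand in powers of $1/B$. A short calculation yields $s_i - 12 = 87/(4B^2) + O(1/B^3)$, which is strictly positive for every admissible $B$. For (II), squaring and simplifying the target inequality gives $(r(x)-\rmin)^2 + \eps_1^2/4 \le r(x)\eps_1 + \phi^2 - 2r(x)\phi$. Taking $\phi = 0$ and using $r(x)-\rmin \le \rmax-\rmin = O(1/B)$ together with $r(x) \ge \rmin$ shows that any $\eps_1$ of order $(\rmax-\rmin)^2/\rmin = O(1/B^2)$ suffices. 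Choosing $\eps_1 = \eps_2$ of this order yields $\eps = \eps_1 + \eps_2 = O(1/B^2)$; for small enough hidden constants we obtain $\eps \le s_i - 12$, so (I) holds as well.

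\textbf{Implication.} Assume for contradiction that disks $D_1, D_2, D_3$ realizing an infeasible triple are placed between the two separators. A short geometric check (the gap's vertical clearance barely exceeds one disk diameter) forces all three to rest on the base. Over the three orderings along the base, the tangential extent $r_1 + r_3 + 2\sqrt{r_2}(\sqrt{r_1}+\sqrt{r_3})$ is minimized when the largest disk is in the middle, since swapping increases the extent by a nonnegative quadratic term. Among all infeasible triples, the constraint $\sum_i r(a_i) \ge 6 + 12/B^2$ combined with strict monotonicity of the extent in each side radius forces the minimum-extent triple to have side radii $\rmin$ and middle radius $\rmax$, yielding extent exactly $s_i$. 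Hence the triple's horizontal extent is at least $s_i \ge 12 + \eps$. Letting $\omega_\ell, \omega_r$ denote the horizontal overlaps of the triple with the left and right separators, fitting the triple demands $\omega_\ell + \omega_r \ge \eps = \eps_1 + \eps_2$; thus at least one of $\omega_\ell \ge \eps_1/2$ or $\omega_r \ge \eps_2/2$ holds, since otherwise the sum would be below $\eps/2$. By symmetry assume $\omega_\ell \ge \eps_1/2$. Condition (II) then bounds the distance from $D_1$'s center to $p_l$ by $d(\eps_1, x_1) \le r(x_1) - \phi \le r(x_1)$, placing $p_l$ inside the closure of $D_1$ and forcing $D_1$ to intersect the left separator, contradicting disjointness of disk interiors.

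\textbf{Main obstacle.} The trickiest part is the optimization step inside the implication, namely verifying that $(\rmin,\rmax,\rmin)$ really is the narrowest infeasible triple -- as opposed to some triple with larger sum but more favorable geometry. The argument is straightforward once one exploits the linearity of $r$ and the monotonicity of the tangential-extent formula in each side radius, but it requires careful case distinction together with the auxiliary claim that non-base placements cannot reduce the horizontal footprint in a gap this narrow.
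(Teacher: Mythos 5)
Your proof follows essentially the same route as the paper's: identify $(\rmin,\rmax,\rmin)$ as the narrowest infeasible triple with tangential extent $s_i$, verify Conditions (\ref{eq:DTRstarNPH:Cond01}) and (\ref{eq:DTRstarNPH:Cond02}) by algebra (your leading term $87/(4B^2)$ matches the paper's computation exactly), and then use the overlap-plus-Pythagoras argument to force an intersection with a separator. The only cosmetic difference is that you verify the two conditions via a $1/B$-expansion with hidden constants rather than the paper's explicit inequality chains valid for all $B>12$, which is harmless here since the construction's $B$ is large.
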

\begin{proof}
  Recall that the function~$r$ is linear. A triple of disks with
  uniform radius~$r(B/3)=2$ requires a total horizontal space
  of~$2\cdot 2\cdot 3=12$ if placed tightly next to each other on a
  straight line. Therefore, since the radius~$\rmin$ of the
  separators is less than~2, it follows that every feasible disk
  triple fits in the gap since (1) a triple of disks with uniform
  radius~$r(B/3)$ yields an upper bound for the amount of horizontal
  space required by any feasible disk triple and since (2)
  $r(B/3)>\rmin$, which implies that if the three disks are placed
  next to each other on the base, the height of the leftmost point
  of the disk triple is greater than the height of the rightmost
  point~$p_l$ of the left separator and, therefore, these disks do not
  touch (and the same holds true for the right side respectively), see
  Fig.~\ref{fig:theom:DTRstarNPH:maximumFeasibleOnBase}.

  \begin{figure}[tb]
  \centering
	  \subcaptionbox{\label{fig:theom:DTRstarNPH:maximumFeasibleOnBase}}{\includegraphics[page=1,width=0.8\textwidth]{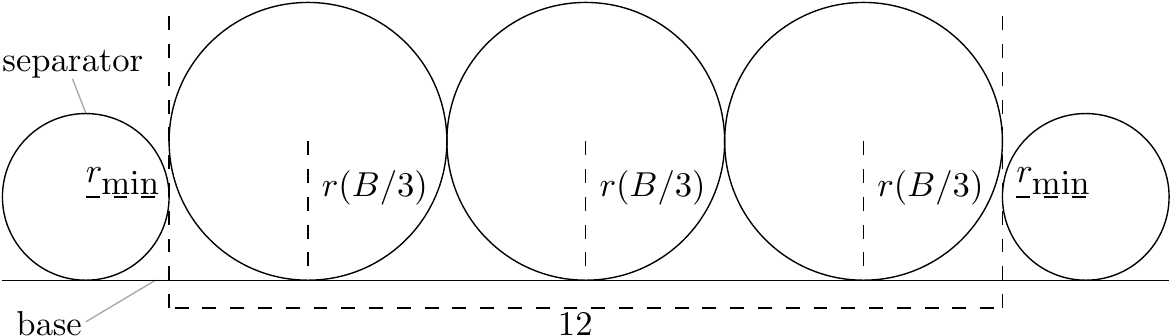}}
	  \hfill
	  \subcaptionbox{\label{fig:theom:DTRstarNPH:smallerSimplifiedGap}}{\includegraphics[page=1,width=0.8\textwidth]{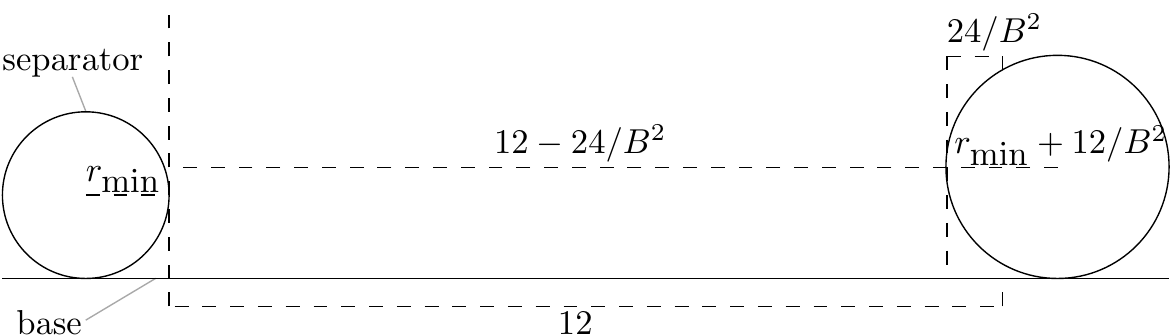}}

      \caption{Two illustrations regarding the proof of
        Lemma~\ref{lem:DTRstarNPH:simplified:triples}. \protect\subref{fig:theom:DTRstarNPH:maximumFeasibleOnBase}
        Depiction of a feasible input triple's disk
        representation. This particular representation requires the
        largest possible amount of horizontal space out of all
        representations for feasible input triples with sum~$B$.
        \protect\subref{fig:theom:DTRstarNPH:smallerSimplifiedGap}An
        upper bound for the amount of horizontal space between the two
        disks placed in a gap's corner.}
\end{figure}

Next, consider the disk triple~$t_i=(\rmin,\rmax,\rmin)$. The
sum of the integers corresponding to the disks of~$t_i$
is~$B/4+1+B/2-1+B/4+1=B+1$ and, therefore,~$t_i$ is infeasible. When
placing the three disks next to each other on a straight line, placing
the disk with radius~$\rmax$ in the middle maximizes the difference
between the radii of adjacent disks and, therefore, minimizes the
overall horizontal space required, which, using the Pythagorean
Theorem, can be described as
$s_i=2\rmin+2\sqrt{(\rmax+\rmin)^2-(\rmax-\rmin)^2}$, see
Fig.~\ref{fig:theom:DTRstarNPH:minimumInfeasibleOnBase}. Since~$r$
is linear and $B+1$ is the smallest possible sum of any infeasible
integer triple,~$s_i$ is the least possible amount of horizontal space
required by any infeasible disk triple.

\begin{figure}[tb] 
 \hfill \subcaptionbox{\label{fig:theom:DTRstarNPH:minimumInfeasibleOnBase}}{\includegraphics[page=1]{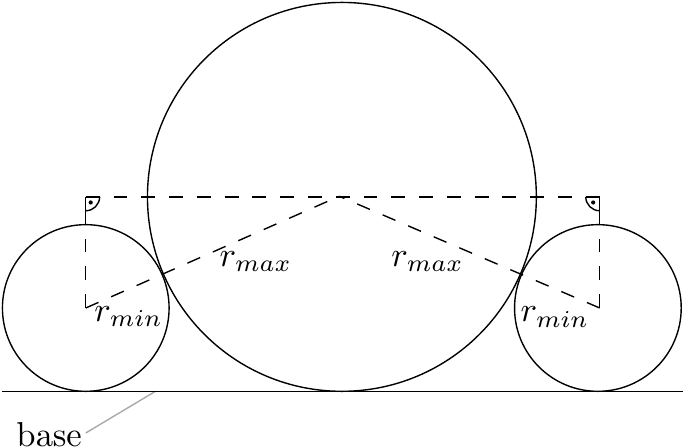}}
 \hfill \subcaptionbox{\label{fig:theom:DTRstarNPH:dxPl}}{\includegraphics[page=2]{fig/DTRstarNPHminimumInfeasibleOnBottom.pdf}}
 \hfill\null
  \caption{By Lemma~\ref{lem:DTRstarNPH:simplified:triples}, it is not
    possible to place an infeasible disk triple inside a simplified
    gap. \protect\subref{fig:theom:DTRstarNPH:minimumInfeasibleOnBase}~The
    smallest possible infeasible disk triple.
    \protect\subref{fig:theom:DTRstarNPH:dxPl}~Disk~$D(x)$ is
    intersecting the separator.}
\end{figure}

In order to show that $0<\varepsilon \le 17/B^2$ is a sufficient
choice to satisfy Condition~\ref{eq:DTRstarNPH:Cond01}, for an
arbitrary~$0<c\le 17$ we assign~$\varepsilon =c/B^2$ and show that the
condition holds and for any $B>12$.

\begin{align*}
  12+\varepsilon \le s_i\Leftrightarrow \\
  12+c/B^2\le 2\rmin+2\sqrt{(\rmax+\rmin)^2-(\rmax-\rmin)^2}\Leftrightarrow \\
  3+(c/4)/B^2-(1/2)\rmin\le \sqrt{\rmax\rmin}\Leftarrow \\
  (3+(c/4)/B^2-(1/2)(2-1/B+12/B^2))^2\le~~~~\\
  (2-1/B+12/B^2)(2+2/B-12/B^2)\Leftrightarrow \\
  9+(c^2/16)/B^4+1+1/(4B^2)+36/B^4-1/B+12/B^2-6/B^3+~~~~\\
  3c/(2B^2)-6~+3/B-36/B^2-c/(2B^2)+c/(4B^3)-3c/B^4\le~~~~\\
  4+2/B-2/B^2+36/B^3-144/B^4\Leftrightarrow \\
  (c^2/16-3c+180)/B^4+(c/4-42)/B^3+(c-87/4)/B^2\le 0\Leftarrow \\
  c^2/16-3c+180+(c/4-42)B+(c-87/4)B^2\le 0\Leftarrow \\
  17/16-3\cdot 0+180+(17/4-42)B+(17-87/4)B^2\le 0\Leftrightarrow \\
  17/16+180-(151/4)B-(19/4)B^2\le 0~~~~
\end{align*}

The last inequality clearly holds true for any~$B>12$, which concludes
the proof of this step.

In order to show that $\varepsilon_1=16/B^2$ and $0\le \phi \le 1/B^2$
satisfy Condition~\ref{eq:DTRstarNPH:Cond02}, we substitute~$y=x\cdot
12/B$ and show that
\begin{align*} d((16/B^2),(y\cdot B/12))\le
  r(y\cdot B/12)-c/B^2
\end{align*}
for any~$y\in \lbrace 3+1\cdot 12/B, 3+2\cdot 12/B,\dots
,6-12/B\rbrace $, any~$0\le c\le 1$ and any~$B>12$.
\begin{align*}
  d((16/B^2),(y\cdot B/12))\le r(y\cdot B/12)-c/B^2 \Leftrightarrow \\
  \sqrt{(r(y\cdot B/12)-8/B^2)^2+(r(y\cdot B/12)-\rmin)^2}\le r(y\cdot B/12)-c/B^2\Leftarrow \\
  2\cdot r(y\cdot B/12)^2-16\cdot r(y\cdot B/12)/B^2+64/B^4-2r(y\cdot B/12)\rmin+(\rmin)^2\le ~~~~\\
  r(y\cdot B/12)^2-2c\cdot r(y\cdot B/12)/B^2+c^2/B^4\Leftrightarrow \\
  r(y\cdot B/12)^2-16\cdot r(y\cdot B/12)/B^2+64/B^4-2r(y\cdot B/12)\rmin+(\rmin)^2\le ~~~~\\
  -2c\cdot r(y\cdot B/12)/B^2+c^2/B^4\Leftrightarrow \\
  (4+16/B^2+y^2/B^2-16/B+4y/B-8y/B^2)+(-32/B^2+64/B^3-16y/B^3)+~~~~\\
  64/B^4+(-8+16/B-4y/B+4/B-8/B^2+2y/B^2-48/B^2+~~~~\\
  96/B^3-24y/B^3)+(4+1/B^2+144/B^4-4/B+48/B^2-24/B^3)+~~~~\\
  (4c/B^2 -8c/B^3+2cy/B^3)-c^2/B^4\le 0\Leftrightarrow \\
  (208-c^2)/B^4+(64-16y+96-24y-24-8c+2cy)/B^3+~~~~\\
  (16+y^2-8y-32-8+2y-48+1+48+4c)/B^2\le 0\Leftrightarrow \\
  (208-c^2)/B^4+(136-40y+2cy-8c)/B^3+(-23+y^2-6y+4c)/B^2\le 0\Leftarrow \\
  208-c^2+(136-40y+2cy-8c)B+(-23+y^2-6y+4c)B^2\le 0\Leftarrow \\
  208-0^2+(136-40\cdot 3+2\cdot 1\cdot 6-8\cdot 0)B+(-23+0+4\cdot 1)B^2\le 0\Leftrightarrow \\
  208+28B-19B^2\le 0~~~~
\end{align*}

The last inequality clearly holds true for any $B>12$, which concludes
this part of the proof.

Finally, we show that no infeasible disk triple triple can be placed
in the gap together with two separators.  Recall that the distance
between the rightmost point~$p_l$ of the left separator and the
leftmost point~$p_r$ of the right separator, which are located at
height~$\rmin$, is exactly~$12$
units. Condition~\ref{eq:DTRstarNPH:Cond01} ensures that all
infeasible disk triples take up at least~$12+\varepsilon $ units of
horizontal space, however, this condition is not sufficient to
guarantee that infeasible disk triples can not be placed between the
separators since we do not know yet at what height the leftmost and
the rightmost point of the disk triple are located. However, it is
guaranteed that either the leftmost point of the disk triple is
located at least~$\varepsilon_1 /2$ units to the left of~$p_l$ or the
rightmost point of the triple is located at least~$\varepsilon_1 /2$
units to the right of~$p_r$. Let now~$x\in \lbrace B/4+1,\dots
,B/2-1\rbrace$ be an input integer. The Pythagorean Theorem implies
that the distance between~$p_l$~($p_r$) and the center of a
disk~$D(x)$ with radius~$r(x)$ whose center is located between the two
separator disks and whose leftmost (rightmost) point is located at
least~$\varepsilon_1 /2$ units to the left (right) of~$p_l$~($p_r$) is
at
most~$d(\varepsilon_1,x)=\sqrt{(r(x)-\varepsilon_1/2)^2+(r(x)-\rmin)^2}$,
as illustrated in
Fig.~\ref{fig:theom:DTRstarNPH:dxPl}. Condition~\ref{eq:DTRstarNPH:Cond02}
ensures that this distance is at most $r(x)-\phi$, implying
that~$D(x)$ intersects the left (right) separator. Therefore,
Condition~\ref{eq:DTRstarNPH:Cond01} and
Condition~\ref{eq:DTRstarNPH:Cond02} together guarantee that
infeasible disk triples together with two separators can not be placed
inside a gap in the simplified scenario. The significance
of~$\varepsilon_2$ becomes clear in the proof of
Lemma~\ref{lem:DTRstarNPH:nonTwelve}, where we tailor our
conditions to apply to the original scenario as well. \qednew
\end{proof}

So far we assumed that the separators are always placed in the corners
of the gap. But in fact, separators could be placed in a different
location, moreover, there could even be gaps with multiple separators
and gaps with zero or one separator. Since the radius of the
separators is~$\rmin$, which is the radius of the smallest possible
input disk, it seems natural to place them in the gaps' corners to
efficiently utilize the horizontal space. However, all feasible disk
triples (except~$(B/3,B/3,B/3)$) require less than~$12$ units of
horizontal space. It might therefore be possible to place a feasible
disk triple inside a gap together with two disks that are not
necessarily separators but input disks with a radius greater
than~$\rmin$. 
To account for this problem, we prove the following property.

\newcommand{\prDTRstarNPHfeasibleOnlyWithSepText}{A feasible disk
  triple can be placed in the gap together with two other disks only
  if those two disks are separators.}

\begin{property}
  \prDTRstarNPHfeasibleOnlyWithSepText
  \label{pr:DTRstarNPH:feasibleOnlyWithSep}
\end{property}

We define
$s_f=2r(B/4)+2\sqrt{(r(B/2)+r(B/4))^2-(r(B/2)-r(B/4))^2}$. In the
proof of Lemma~\ref{lem:DTRstarNPH:simplified:separators}, we will see
that~$s_f$ is a lower bound for the horizontal space consumption of
any feasible triple.

\newcommand{\lemDTRstarNPHSimplifiedSeparatorsText}{%
  There exist $\xi >0$ and~$\xi_1, \xi_2, \psi \ge 0$
  with~$\xi=\xi_1+\xi_2$ satisfying the following two
  conditions: \newcond{eq:DTRstarNPH:Cond03} $12-24/B^2+\xi \le s_f$
  and\newcond{eq:DTRstarNPH:Cond04} $d(\xi_1,x)\le r(x)-\psi\, \forall
  x\in \lbrace B/4+1,\dots ,B/2-1\rbrace$.
  These conditions imply
  Property~\ref{pr:DTRstarNPH:feasibleOnlyWithSep} for the simplified
  scenario.}
\begin{lemma}
\label{lem:DTRstarNPH:simplified:separators}
\lemDTRstarNPHSimplifiedSeparatorsText
\end{lemma}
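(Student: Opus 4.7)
The plan is to mirror the structure of the proof of Lemma~\ref{lem:DTRstarNPH:simplified:triples}: first pin down what $s_f$ measures geometrically, then verify the two conditions algebraically, and finally deduce the property by a packing argument.

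First, I would rewrite $s_f = 2r(B/4) + 4\sqrt{r(B/4)\,r(B/2)}$ via the identity $(a+b)^2 - (a-b)^2 = 4ab$, and observe that this equals the horizontal extent of the triple $(r(B/4), r(B/2), r(B/4))$ placed tangent to a line with the largest disk in the middle. By the linearity of $r$ and the same monotonicity reasoning as in the proof of Lemma~\ref{lem:DTRstarNPH:simplified:triples} (placing the largest disk in the middle minimizes the footprint, and moving the extremal integer labels as far apart as possible only narrows the triple further), every feasible disk triple has horizontal width at least $s_f$. Next, I would verify Condition~\ref{eq:DTRstarNPH:Cond03} by substituting $r(B/4) = 2 - 1/B$ and $r(B/2) = 2 + 2/B$, picking $\xi = c/B^2$ for a small constant $c$, squaring once, and reducing the claim to a polynomial inequality in $B$ whose leading $B^2$-coefficient is negative, so that it holds for all $B > 12$, exactly in the style of the calculation for Condition~\ref{eq:DTRstarNPH:Cond01}. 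Condition~\ref{eq:DTRstarNPH:Cond04} requires no new work at all, since $d$ is identical to the function used in Lemma~\ref{lem:DTRstarNPH:simplified:triples}; the choices $\xi_1 = 16/B^2$ and $\psi \le 1/B^2$ that verified Condition~\ref{eq:DTRstarNPH:Cond02} carry over verbatim.

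To deduce Property~\ref{pr:DTRstarNPH:feasibleOnlyWithSep}, I would consider a feasible disk triple placed in a simplified gap together with two corner disks $D_L, D_R$, at least one of which, say $D_L$, is an input disk with radius $r' \ge \rmin + 12/B^2$ (the smallest possible non-separator input radius, since $r(a+1) - r(a) = 12/B^2$). A direct tangency computation with the left bounding outer disk of radius $r_o$ shows that the rightmost point of such a $D_L$ lies at least $\bigl(\sqrt{r_o/\rmin}+1\bigr)(r'-\rmin) \ge 2(r'-\rmin) \ge 24/B^2$ further toward the middle of the gap than the corresponding point of a separator placed in the same corner (using $r_o > \rmin$ from Lemma~\ref{lem:DTRstarNPH:UpperBoundary}). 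Consequently, the horizontal room available for the triple between the extremal points of $D_L$ and $D_R$ is at most $12 - 24/B^2$, and Condition~\ref{eq:DTRstarNPH:Cond03} then forces the triple to overhang the extremal point of $D_L$ or $D_R$ by at least $\xi_1/2$ on one side. Condition~\ref{eq:DTRstarNPH:Cond04}, applied exactly as in the proof of Lemma~\ref{lem:DTRstarNPH:simplified:triples}, then produces a geometric overlap between the overhanging triple disk and $D_L$ or $D_R$, contradicting the assumed placement.

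The principal obstacle is justifying that Condition~\ref{eq:DTRstarNPH:Cond04}, whose distance estimate $d(\xi_1, x)$ is calibrated to a separator of radius exactly $\rmin$, remains valid when the relevant corner disk has larger radius $r' > \rmin$. The key observation is that the pertinent extremal point of such a corner disk sits at height $r' > \rmin$ and is pushed further into the gap, so the distance from that point to the center of any triple disk of radius $r(x)$ with horizontal overhang $\xi_1/2$ is strictly less than $d(\xi_1, x)$; hence the overlap claim only becomes stronger. The auxiliary slack $\xi_2$ absorbs the lower-order geometric discrepancies arising from the different heights of the extremal points, so that the estimate remains uniform in $x$.
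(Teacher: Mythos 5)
Your proposal follows essentially the same route as the paper: the same algebraic verification of Condition~(\ref{eq:DTRstarNPH:Cond03}) with $\xi=c/B^2$, the same observation that Condition~(\ref{eq:DTRstarNPH:Cond04}) is verbatim the computation for Condition~(\ref{eq:DTRstarNPH:Cond02}), the same identification of $t_f=(r(B/4),r(B/2),r(B/4))$ as the narrowest feasible triple, and the same $12-24/B^2$ space-accounting plus overlap-forces-intersection argument (your explicit tangency computation for the $24/B^2$ loss replaces the paper's appeal to Fig.~\ref{fig:theom:DTRstarNPH:smallerSimplifiedGap}). One small caveat: your claim that the relevant distance is \emph{strictly less} than $d(\xi_1,x)$ for a larger corner disk fails when $r'>r(x)$ (the vertical offset then grows rather than shrinks), but the available slack still covers this case, and the paper glosses over the same point.
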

\begin{proof}

  In order to show that $0<\xi \le 17/B^2$ is a sufficient choice to
  satisfy Condition~\ref{eq:DTRstarNPH:Cond03}, for an
  arbitrary~$0<c\le 17$ we assign~$\xi =c/B^2$ and show that the
  condition holds for any $B>12$.
\begin{align*}
    12-24/B^2+\xi \le s_f\Leftrightarrow \\
    12-24/B^2+c/B^2\le 2r(B/4)+2\sqrt{(r(B/2)+r(B/4))^2-(r(B/2)-r(B/4))^2}\Leftrightarrow \\
    12+(c-24)/B^2-2r(B/4)\le 4\sqrt{r(B/2)\cdot r(B/4)}\Leftrightarrow \\
    2+(c/4-6)/B^2+(1/2)/B\le \sqrt{(2+2/B)\cdot (2-1/B)}\Leftarrow \\
    4+(c^2/16-3c+36)/B^4+(1/4)/B^2+(c-24)/B^2+2/B+(c/4-6)/B^3\le ~~~~\\
    4-2/B+4/B-2/B^2\Leftrightarrow \\
    (c^2/16-3c+36)/B^4+(c/4-6)/B^3+(c-24+1/4+2)/B^2\le 0\Leftarrow \\
    c^2/16-3c+36+(c/4-6)B+(c-22+1/4)B^2\le 0\Leftarrow \\
    17^2/16-3\cdot 0+36+(17/4-6)B+(17-22+1/4)B^2\Leftrightarrow \\
    289/16+36-(7/4)B-(19/4)B^2\le 0~~~~
\end{align*}

The last inequality can easily be verified to be true for any~$B>12$. %

To prove Condition~\ref{eq:DTRstarNPH:Cond04}, we choose
$\xi_1=16/B^2$ and an arbitrary~$ \psi$ with $0\le \psi \le
1/B^2$. The remaining proof of Condition~\ref{eq:DTRstarNPH:Cond04} is
identical to that of Condition~\ref{eq:DTRstarNPH:Cond02}.

Recall that the second smallest possible input disk radius
is~$r(B/4+2)=2-(1-24/B)/B=2-(1-12/B)/B+12/B^2=\rmin+12/B^2$ and,
therefore,~$12-2\cdot 12/B^2=12-24/B^2$ is an upper bound for the
remaining horizontal space in a gap in which two disks have been
placed such that one of the disks has radius greater than~$\rmin$, see
Fig.~\ref{fig:theom:DTRstarNPH:smallerSimplifiedGap}. The input
integers' values are at least~$B/4+1$ and at most~$B/2-1$, therefore,
the horizontal space consumption of the disk
triple~$t_f=(r(B/4),r(B/2),r(B/4))$ is a lower bound for the space
consumption of any feasible disk triple since the total difference
between the radii of adjacent disks in~$t_f$ is larger than that of
any feasible disk triple. Yet again we utilize the Pythagorean Theorem
to describe~$t_f$'s required horizontal space
as~$s_f=2r(B/4)+2\sqrt{(r(B/2)+r(B/4))^2-(r(B/2)-r(B/4))^2}$.
Condition~\ref{eq:DTRstarNPH:Cond03} therefore ensures that any
feasible disk triple consumes at least~$12-24/B^2+\xi$ horizontal space and, analogously to
Condition~\ref{eq:DTRstarNPH:Cond02},
Condition~\ref{eq:DTRstarNPH:Cond04} together with
\begin{equation*}
d(\xi_1,x)=\sqrt{(r(x)-\xi_1/2)^2+(r(x)-\rmin)^2}  
\end{equation*}
ensures that one of the disks of~$t_f$ intersects a separator or one
of the replacing disks, implying
Property~\ref{pr:DTRstarNPH:feasibleOnlyWithSep}.  Like
with~$\varepsilon_2$ the significance of~$\xi_2$ will become apparent
in the proof of Lemma~\ref{lem:DTRstarNPH:nonTwelve} when we describe
how to apply our conditions to the original scenario.
\hfill\qednew
\end{proof}
We verify in the proofs of Lemmas~\ref{lem:DTRstarNPH:simplified:triples}
and~\ref{lem:DTRstarNPH:simplified:separators} that choosing~$\varepsilon_1,\xi_1=16/B^2$
and~$\varepsilon_2,\phi ,\xi_2 ,\psi = 1/B^2$ satisfies our four
conditions. %

Intuitively, Conditions~(\ref{eq:DTRstarNPH:Cond01})--(\ref{eq:DTRstarNPH:Cond04}) have the following
meaning. By~(\ref{eq:DTRstarNPH:Cond01}), the horizontal space
consumption of any infeasible triple is greater than~12 by some fixed
buffer. By~(\ref{eq:DTRstarNPH:Cond03}), the horizontal space
consumption of any feasible triple is very close
to~12. Conditions~(\ref{eq:DTRstarNPH:Cond02})
and~(\ref{eq:DTRstarNPH:Cond04}) imply that if the overlap of the
horizontal projections of a separator and an input disk is large enough,
the two disks intersect, implying that triples with sufficiently large
space consumption can indeed not be placed between two separators.

In the original scenario, consider a straight line directly below the
two separators. We call this straight line the gap's \emph{chord}, see
Fig.~\ref{fig:theom:DTRstarNPH:originalGap}. The gap's chord has a
function similar to the base in the simplified scenario. We still want
separators to be placed in the gap's corners. The distance between the
rightmost point~$p_l$ of the left separator and the leftmost
point~$p_r$ of the right separator is now allowed to be slightly
more than~$12$.
The horizontal space consumption of a disk triple placed on
the bow is lower compared to the disk triple being placed on the
chord. Moreover, the overlap of the horizontal projections of a
separator and an input disk can now be bigger without causing an
intersection. However, we show that if the maximum distance~$d$
between a gap's bow and its chord is small enough, the original
scenario is sufficiently close to the simplified one, and the four
conditions still hold, implying the desired properties.

\newcommand{\lemDTRstarNPHnonTwelveText}{In the original scenario,
  let~$d \leq 1/4B^2$, and let the amount of free horizontal space in
  each gap after inserting the two separators in each corner be
  between~12 and~$12 + 1 / 4B^2$. Then,
  Properties~\ref{pr:DTRstarNPH:fitIffFeasible}
  and~\ref{pr:DTRstarNPH:feasibleOnlyWithSep} still hold.}
\begin{lemma}
  \label{lem:DTRstarNPH:nonTwelve}
  \lemDTRstarNPHnonTwelveText
\end{lemma}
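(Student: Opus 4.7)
The plan is to reduce to the simplified scenario: I would show that moving from the simplified to the original scenario perturbs every geometric quantity appearing in Conditions~\ref{eq:DTRstarNPH:Cond01}--\ref{eq:DTRstarNPH:Cond04} by an amount that is comfortably absorbed by the slack terms $\varepsilon_2$, $\xi_2$, $\phi$, $\psi$ deliberately left free in Lemmas~\ref{lem:DTRstarNPH:simplified:triples} and~\ref{lem:DTRstarNPH:simplified:separators}. Concretely, I would keep the choice $\varepsilon_1=\xi_1=16/B^2$ and $\varepsilon_2=\xi_2=\phi=\psi=1/B^2$ already verified at the end of those lemmas, and check that this $1/B^2$ buffer dominates the cumulative perturbation caused by the bow for every $B>12$.

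First I would isolate the two sources of discrepancy. Source~(i): the free horizontal space between $p_l$ and $p_r$ is now up to $12+1/(4B^2)$ instead of exactly $12$. Source~(ii): the bow dips below the chord by at most $d\le 1/(4B^2)$, so disks resting on the bow can sit slightly lower than on the chord, propagating in two directions. First, a disk triple tangent to the bow consumes strictly less horizontal space than the same triple tangent to the chord, so the lower bounds $s_i$ and $s_f$ each decrease by some amount $\Delta s$. Second, an input disk whose horizontal projection overlaps a separator's projection by $\eta$ can sit a little lower, so the center--corner distance bound $d(\eta,x)$ must be relaxed by some amount $\Delta d$. Note that source~(i) actually works in our favour for the ``feasible triple fits'' direction of Property~\ref{pr:DTRstarNPH:fitIffFeasible}, which is immediate since the simplified gap already accommodates feasible triples.

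Next I would quantify $\Delta s$ and $\Delta d$. Because all radii involved in any triple--plus--separator configuration lie in the narrow interval $[\rmin,\rmax]\subset[2-1/B,\,2+1/6]$, the contact angle at every tangency stays uniformly bounded away from $0$ and $\pi/2$. A first-order trigonometric estimate then yields $\Delta s\le C_1\cdot d$ and $\Delta d\le C_2\cdot d$ for small absolute constants $C_1,C_2$ independent of $B$; I would record this as a short geometric sublemma. Substituting the perturbed quantities, the original-scenario inequalities take the form $s_i-\Delta s\ge 12+1/(4B^2)+\varepsilon_1$, together with $d(\varepsilon_1,x)+\Delta d\le r(x)$, and symmetrically for $s_f$ using $\xi_1$. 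Since the combined perturbation $\Delta s+1/(4B^2)$, respectively $\Delta d$, is $O(1/B^2)$ with an explicit constant that is easily checked to be at most $1/B^2$ for every $B>12$, the slack absorbs it; hence Conditions~\ref{eq:DTRstarNPH:Cond01}--\ref{eq:DTRstarNPH:Cond04} remain valid in the original scenario, and the arguments of Lemmas~\ref{lem:DTRstarNPH:simplified:triples} and~\ref{lem:DTRstarNPH:simplified:separators} now yield Properties~\ref{pr:DTRstarNPH:fitIffFeasible} and~\ref{pr:DTRstarNPH:feasibleOnlyWithSep} verbatim.

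The main obstacle is the short geometric sublemma bounding $\Delta s$ and $\Delta d$: one must show, with explicit absolute constants, that replacing a straight base by a bow of sagitta at most $d$ reduces the horizontal space consumed by any triple of disks with radii in $[\rmin,\rmax]$ by at most $C_1 d$, and relaxes the center--corner distance by at most $C_2 d$. This is a routine but finicky first-order calculation that has to be carried out carefully enough to produce constants small enough for the $1/B^2$ buffer to dominate; once it is in hand, the remainder of the proof is a direct substitution into the conditions already established for the simplified scenario.
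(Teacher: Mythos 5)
Your proposal follows essentially the same route as the paper: identify the two perturbation sources (the extra $1/(4B^2)$ of available width and the sagitta $d\le 1/(4B^2)$ of the bow), bound the resulting reduction in horizontal space consumption and the relaxation of the center--corner distance, and verify that the slack $\varepsilon_2=\xi_2=\phi=\psi=1/B^2$ absorbs both. The ``finicky sublemma'' you defer is precisely what the paper carries out, and it is simpler than you anticipate: the distance relaxation is bounded trivially by $d$ itself (the disk center rises by at most $d$), and the horizontal saving is bounded by an explicit Pythagorean computation on the extremal triple $(r(B/4),r(B/2),r(B/4))$, yielding $s\le 1/(4B^2)$, so both of your constants $C_1,C_2$ are effectively $1$.
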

\begin{proof}
  Obviously, each feasible triple can still fit in the gap together
  with two separators.

\begin{figure}[tb]
	\hfill
	\subcaptionbox{\label{fig:theom:DTRstarNPH:maxHorizontalSave01}}{\includegraphics[page=1]{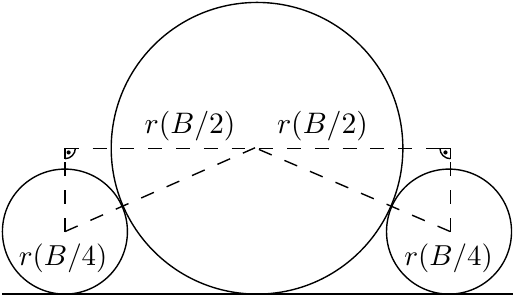}}
	\hfill
	\subcaptionbox{\label{fig:theom:DTRstarNPH:maxHorizontalSave02}}{\includegraphics[page=2]{fig/DTRstarNPHmaxHorizontalSave.pdf}}
	\hfill\null
  \caption{An upper bound for the amount of horizontal space that can
    be saved by placing a disk triple on the gap's bow instead of its
    chord can be calculated by replacing the bow by a straight
    line~$d$ units above the chord and comparing the required space to
    the space required when placing the two outer disks directly on
    the chord.}
  \label{fig:theom:DTRstarNPH:maxHorizontalSave}
\end{figure}

  We now compute the amount of horizontal space that can be saved in
  the original scenario compared to the simplified scenario when
  placing any infeasible or feasible disk triple on the bow instead of
  on the chord.

  Consider the disk triple~$t=(r(B/4),r(B/2),r(B/4))$. Yet again
  utilizing the Pythagorean Theorem, we calculate an upper bound for
  the amount of horizontal space that can be saved to be
  \begin{align*}
    s&=2(\sqrt{(r(B/2)+r(B/4))^2-(r(B/2)-r(B/4))^2}\\
    &-\sqrt{(r(B/2)+r(B/4))^2-(r(B/2)+d-r(B/4))^2})
  \end{align*}
  by simply moving the left and right disk down by~$d$ units and as
  far to the center disk as possible, see
  Fig.~\ref{fig:theom:DTRstarNPH:maxHorizontalSave}. Recall
  that~$B/4$ is smaller and~$B/2$ is greater than any input integer,
  and the differences between the radii of adjacent disks in~$t$ are
  larger than in any actual input disk triple. Therefore, the
  value~$s$ is also an upper bound for the amount of horizontal space
  that can be saved in the original scenario compared to the
  simplified scenario when placing any infeasible or feasible disk
  triple.

  We now show that for~$d\le 1/4B^2$ an upper bound for~$s$
  is~$1/4B^2$.

\begin{align*}
s\le 1/4B^2\Leftrightarrow \\
2(\sqrt{(r(B/2)+r(B/4))^2-(r(B/2)-r(B/4))^2}~~~~\\
-\sqrt{(r(B/2)+r(B/4))^2-(r(B/2)+d-r(B/4))^2})\le 1/4B^2\Leftrightarrow \\
2(\sqrt{4\cdot r(B/2)\cdot r(B/4)}~~~~\\
-\sqrt{4\cdot r(B/2)\cdot r(B/4)-2d\cdot r(B/2)+2d\cdot r(B/4)-d^2})\le 1/4B^2\Leftrightarrow \\
\sqrt{4\cdot r(B/2)\cdot r(B/4)}-1/8B^2~~~~\\
\le \sqrt{4\cdot r(B/2)\cdot r(B/4)-2d\cdot r(B/2)+2d\cdot r(B/4)-d^2}\Leftarrow \\
1/64B^4-\sqrt{4\cdot r(B/2)\cdot r(B/4)}/4B^2\le 2d\cdot r(B/4)-2d\cdot r(B/2)-d^2\Leftrightarrow \\
1/16B^2+8dB^2\cdot (2+2/B)-8dB^2\cdot (2-1/B)+4d^2B^2~~~~\\
\le \sqrt{4\cdot (2+2/B)\cdot (2-1/B)}\Leftarrow \\
(1/32B^2+4dB^2\cdot (2+2/B-2+1/B)+2d^2B^2)^2\le (2+2/B)\cdot (2-1/B)\Leftarrow \\
(5/32B^2+1\cdot (3/B))^2\le 4-2/B+4/B-2/B^2\Leftrightarrow \\
25/1024B^4+30/32B^3+9/B^2\le 4-2/B+4/B-2/B^2\Leftrightarrow \\
25/1024B^4+15/16B^3+11/B^2-2/B-4\le 0\Leftarrow \\
25/1024+15B/16+11B^2-2B^3-4B^4\le 0~~~~
\end{align*}
The last inequality clearly holds true for any $B > 12$.

Consider an infeasible triple placed in a gap with two separators in
the corners. The triple requires horizontal space at least $s_i - s
\geq 12 + \eps_1 + \eps_2 - s \geq 12 + (16 + \frac{3}{4})/B^2$, and
at most $12 + 1/4B^2$ is available. Thus, \wlg, the left separator and
the leftmost disk~$D(x)$ of the triple have overlap of horizontal
projections of at least $(16 + 1/2B^2)/2 = 8 + 1/4B^2$. If~$D(x)$
would be placed on the chord instead of the bow, the distance
between~$p_l$ and the center of~$D(x)$ would be at most~$d(\eps_1, x)
\leq r(x) - \phi = r(x) - 1/B^2$. When~$D(x)$ is moved up and placed
on the bow instead, this distance remains at most~$r(x) - 1/B^2 + d
\leq r(x) - 3/4B^2$. Thus, the infeasible triple doesn't fit inside
the gap, and Property~\ref{pr:DTRstarNPH:fitIffFeasible} holds.

Now consider a feasible triple and assume that a separator has been
replaced by a bigger disk in the gap's corner. The triple requires
horizontal space at least $s_f - s \geq 12 - 24/B^2 + \xi_1 + \xi_2 -
1/4B^2 = 12 - 7/B^2 - 1/4B^2$. Replacing a separator by a bigger disk
in the gap's corner consumes at least $24/B^2$ horizontal space in the
simplified scenario (see
Fig.~\ref{fig:theom:DTRstarNPH:smallerSimplifiedGap}) and even more in
the original scenario. Then, \wlg, the overlap of the horizontal
projections of~$D(x)$ and the disk in the left gap corner is at
least~$((12 - 7/B^2 - 1/4B^2) - (12 - 24/B^2 + 1/4B^2))/2 = 8/B^2 +
1/4B^2$, and, analogously to the above argument, the two disks
intersect. Therefore, the triple can not fit in the gap, and
Property~\ref{pr:DTRstarNPH:feasibleOnlyWithSep} follows. \qednew
\end{proof}

In order to conclude the hardness proof, it therefore remains to describe how
to choose the radii for the central and outer disks and how to create
the gaps such that~$d\le 1/4B^2$.

Recall that we have a central disk~$D_c$ with radius~$r_c$ and~$m$
outer disks with radius~$r_o$ which are tightly packed around~$D_c$
such that~$m$ equal-sized gaps are created.  With basic trigonometry
we see that~$r_c+r_o=r_o/\sin(\pi /m)$ %
and, therefore,~$r_c=r_o/\sin(\pi /m)-r_o$. Clearly, there always
exists a value~$r_o$ such that the two separator disks can be placed
in each gap's corners and such that the distance between each pair of
separators is exactly~$12$ units. Let~$\roexact$ be this value.
Moreover, the maximum distance~$d$ between a gap's bow and its chord
is of particular importance, see
Fig.~\ref{fig:theom:DTRstarNPH:originalGap}. Using the Pythagorean
Theorem, it can be calculated to
be~$d=r_c-(\sqrt{(r_c+\rmin)^2-(6+\rmin)^2}-\rmin)$. The crucial
observation is that we do not necessarily need to
choose~$m=n$. Instead we may choose any~$m\ge n$ and thereby
decrease~$d$, as long as we make sure that~$m$ is still a polynomial
in the size of the input or numeric values and that the~$m-n$
additional gaps cannot be used to solve an instance which should be
infeasible.

\newcommand{\lemDTRstarNPHchooseRadText}{There exist constants~$c_1$,
  $c_3$, $c_4$, such that for~$m=B^{c_1}$, $\eps_3=1/B^{c_3}$
  and~$\eps_4=1/B^{c_4}$, there exist values $\roapprox$ for~$r_o$
  and~$\rcapprox$ for $r_c$, for which it holds $\roexact < \roapprox
  \le \roexact +\eps_3$ and~$\rcexact < \rcapprox \le \rcexact+\eps_4$
  for $\rcexact = \roapprox / \sin(\pi /m)-\roapprox$.
  Moreover, the constants can be chosen such that~$d \leq 1/4B^2$ and
  such that the amount of free horizontal space in each gap is
  between~$12$ and~$12 + 1/4B^2$.
  Finally, $\roapprox$ and~$\rcapprox$ can be computed in polynomial
  time.}
\begin{lemma}
  \label{lem:DTRstarNPH:chooseRad}
  \lemDTRstarNPHchooseRadText
\end{lemma}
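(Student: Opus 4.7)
The plan is to pick $m=B^{c_1}$ large enough that the bow-chord distance $d$ is at most $1/4B^2$, and then to approximate $\roexact$ and $\rcexact$ to polynomial precision by binary search on the Real RAM. First, I would analyze how $d$ depends on $m$. From $r_c=r_o(1/\sin(\pi/m)-1)$ and $\sin(\pi/m)\le \pi/m$, one obtains $r_c\ge r_o(m/\pi-1)$. Since $\roexact$ lies in the bounded interval $[\rmin,r_o^u]$ with $r_o^u=38$ by Lemma~\ref{lem:DTRstarNPH:UpperBoundary}, this gives $r_c=\Theta(m)$. Expanding
\begin{equation*}
d = r_c+\rmin-\sqrt{(r_c+\rmin)^2-(6+\rmin)^2}
\end{equation*}
around large $r_c$ yields $d=(6+\rmin)^2/(2(r_c+\rmin))+O(1/r_c^3)=O(1/m)$, so any fixed $c_1\ge 3$ guarantees $d\le 1/4B^2$ whenever $B$ exceeds some threshold compatible with the standing assumption $B>12$.

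Next, I would obtain $\roapprox$ by binary search on $[\rmin,r_o^u]$. For each candidate $r_o$, the distance between the inner tangent points $p_l,p_r$ of the two separators placed in a gap's corners is computable in constant time on the Real RAM via the trigonometric layout formulas for the gap. This distance is continuous and strictly increasing in $r_o$ (larger outer disks push the separators farther apart along the bow), so binary search converges; $O(c_3\log B)$ iterations suffice to reach precision $\eps_3=1/B^{c_3}$. I would halt the search as soon as $\roapprox>\roexact$ with the resulting separator gap lying in $(12,12+1/4B^2]$, so that the free horizontal space in each gap is in the required range. I would then set $\rcapprox:=\roapprox/\sin(\pi/m)-\roapprox+\eps_4$ with $\eps_4=1/B^{c_4}$; the additive $\eps_4$ slightly enlarges the central disk so that adjacent outer disks strictly do not touch, ruling out the unanticipated embeddings mentioned earlier, while changing the gap geometry only negligibly.

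The main obstacle I expect is to quantify the Lipschitz constants of the separator-gap distance with respect to $r_o$ and of the free horizontal space and bow-chord distance with respect to $r_c$, so that the approximation errors $\eps_3,\eps_4$ comfortably fit inside the $1/4B^2$ budget. Because all the relevant quantities are smooth compositions of $\sin$, $\sqrt{\,\cdot\,}$ and arithmetic on domains bounded polynomially in $B$, their derivatives are themselves bounded by polynomials in $B$, so suitably large fixed constants $c_3,c_4$ exist. With those bounds in hand, the polynomial-time computability of $\roapprox$ and $\rcapprox$ follows from the logarithmic convergence of binary search combined with constant-time Real RAM evaluation of trigonometric and square-root expressions.
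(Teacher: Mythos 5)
Your high-level plan matches the paper's: pick $m=B^{c_1}$ so that $d=O(1/m)\le 1/4B^2$, then approximate the radii to within $1/B^{c_3}$ and $1/B^{c_4}$ and argue the resulting perturbation of the separator distance stays within the $1/4B^2$ budget. The first part (the $d=\Theta(1/r_c)=\Theta(1/m)$ expansion) is essentially the paper's argument. However, there is a genuine gap in your error analysis of $\rcapprox$. You assert that adding $\eps_4$ to the central radius changes "the gap geometry only negligibly" and that this follows from polynomially bounded Lipschitz constants. But the critical phenomenon is not local smoothness: once the central disk is enlarged, the outer disks no longer touch one another (they are pairwise non-adjacent in a star), so nothing pins them in place. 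Each of the $m$ inter-disk gaps acquires slack $\eps_5=2\eps_4\sin(\pi/m)$, and the outer disks can rotate around the central disk so that all of this slack accumulates in a \emph{single} gap — a factor-$m$ amplification. The paper handles this with an explicit induction (comparing triangle areas via the circumradius formula) showing the accumulated gap is at most $2m\eps_5$, and the argument only closes because $\eps_5=O(\eps_4/m)$ exactly cancels the factor $m$, giving a total increase of at most $2\eps_3+16\eps_4$. A per-gap Lipschitz bound alone does not yield this; without the global accumulation argument your claim that the separator distance stays in $(12,12+1/4B^2]$ is unsupported, and this is where the lemma could actually fail.

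A second, lesser issue: the lemma's "computed in polynomial time" must hold in the bit-complexity sense, because these radii are the output of the \NP-hardness reduction (the vertex weights of the constructed star), not an intermediate quantity in one of the paper's Real-RAM algorithms. Your binary search evaluates a separator-distance predicate involving $\sin$ and $\sqrt{\cdot}$ "in constant time on the Real RAM," which begs the question. The paper instead writes $r_o$ in closed form, chooses $m=2^p$ so the sine can be eliminated via $O(\log m)$ half-angle (nested square-root) steps, and approximates each square root by Heron's method with controlled precision. Your binary search could be salvaged by carrying out each predicate evaluation in rational arithmetic with a quantified monotonicity/Lipschitz bound (the paper proves the relevant monotone bound: increasing the separator distance by $2\eps'$ increases $r_o$ by at least $\eps'$), but as written this step is not established.
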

\begin{proof}
  \textbf{Choosing~$m$}. In order to choose an~$m\ge n$ such
  that~$d\le 1/4B^2$, we require some information about the
  radius~$r_o$ of the outer disks. A precise calculation of this value
  yields a complicated formula, however, a lower as well as an upper
  bound for~$r_o$ are sufficient to conclude our
  argument. Clearly,~$r_o^l=6$ is a lower bound for~$r_o$. In the
  proof of Lemma~\ref{lem:DTRstarNPH:UpperBoundary}, we have shown
  that for $m \geq m_{min}=6$, $r_o^u=38$ is an upper bound
  for~$r_o$. Recalling that~$\rmin$ is a polynomial in~$B$,
  that~$m\ge m_{min}=6$ and utilizing that~$\sin(x)\le x,\forall x\ge
  0$, we can now prove that~$m$ can be chosen as a polynomial in~$B$
  such that~$d\le 1/4B^2$:

\begin{align*}
d\le 1/4B^2\Leftrightarrow \\
r_c-(\sqrt{(r_c+\rmin)^2-(6+\rmin)^2}-\rmin)\le 1/4B^2\Leftrightarrow \\
(r_c+\rmin)-1/4B^2\le \sqrt{(r_c+\rmin)^2-(6+\rmin)^2}\Leftarrow \\
1/16B^4-(r_c+\rmin)/2B^2\le -(6+\rmin)^2\Leftrightarrow \\
1/8B^2+2B^2(6+\rmin)^2-\rmin\le r_c\Leftrightarrow \\
1/8B^2+2B^2(6+\rmin)^2-\rmin+r_o\le r_o/\mathrm{sin}(\pi /m)\Leftrightarrow \\
\mathrm{sin}(\pi /m)\le r_o/(1/8B^2+2B^2(6+\rmin)^2-\rmin+r_o)\Leftarrow \\
\end{align*}
\begin{align*}
\pi /m\le r_o/(1/8B^2+2B^2(6+\rmin)^2-\rmin+r_o)\Leftrightarrow \\
m\ge (\pi /r_o)\cdot(1/8B^2+2B^2(6+\rmin)^2-\rmin+r_o)\Leftarrow \\
m\ge (\pi /r_o^l)\cdot(1/8B^2+2B^2(6+1+\rmin)^2-\rmin+r_o^u+1)~~~~~\\
\ge (\pi /r_o^l)\cdot(1/8B^2+2B^2(6+\rmin)^2-\rmin+r_o^u)~~~~~
\end{align*}

Therefore, we define~$m=B^{c_1}$ where~$c_1$ is a sufficiently large
constant. Note that we need to ensure that~$m\ge n$, which is however
no problem since we can, without loss of generality, assume that~$B$
is a multiple of~$n$ since we could simply multiply each input integer
as well as the bound by~$n$ to obtain a problem instance that is a
yes-instance if and only if the original instance was a yes instance
and whose size is polynomial in the size of the original
input.

For the approximate radii, the upper and lower bounds still
hold. Similar to the proof of
Lemma~\ref{lem:DTRstarNPH:UpperBoundary}, the upper bound of~$12 +
1/12$ for the separator distance provides the equation
$k = 2\sqrt{3} \cdot \sqrt{k+1} + (16+\frac{1}{12})$,
which has a solution for~$k=37.6 < 38 = r_o^u$. Since it is~$\roexact
\leq \roapprox$ and~$\rcexact \leq \rcapprox$, the lower bound~$r_o^l
= 6$ still holds.
 
Note that if the promised approximate values $\roapprox,\rcapprox$ for
$r_o$ and $r_c$ are used and the distance between the separators is
between $12$ and $12+1/4B^2$ the maximum distance between the bow and
the chord changes compared to the precise scenario. It holds now: $d
\leq
\rcapprox-(\sqrt{(\rcapprox+\rmin)^2-(6+1/8B^2+\rmin)^2}-\rmin)$. However,
the upper bound of $1/4B^2$ still holds true because of the
following.

\begin{align*}
\rcapprox-(\sqrt{(\rcapprox+\rmin)^2-(6+1/8B^2+\rmin)^2}-\rmin)\le 1/4B^2\Leftrightarrow \\
(\rcapprox+\rmin)-1/4B^2\le \sqrt{(\rcapprox+\rmin)^2-(6+1/8B^2+\rmin)^2}\Leftarrow \\
1/16B^4-(\rcapprox+\rmin)/2B^2\le -(6+1/8B^2+\rmin)^2\Leftrightarrow \\
1/8B^2+2B^2(6+1/8B^2+\rmin)^2-\rmin\le \rcapprox\Leftarrow \\
1/8B^2+2B^2(6+1/8B^2\rmin)^2-\rmin+\roapprox\le \roapprox/\mathrm{sin}(\pi /m)\Leftrightarrow \\
\mathrm{sin}(\pi /m)\le \roapprox/(1/8B^2+2B^2(6+1/8B^2+\rmin)^2-\rmin+\roapprox)\Leftarrow \\
\pi /m\le \roapprox/(1/8B^2+2B^2(6+1/8B^2+\rmin)^2-\rmin+\roapprox)\Leftrightarrow \\
m\ge (\pi /\roapprox)\cdot(1/8B^2+2B^2(6+1/8B^2+\rmin)^2-\rmin+\roapprox)\Leftarrow \\
m\ge (\pi /r_o^l)\cdot(1/8B^2+2B^2(6+1/8B^2+\rmin)^2-\rmin+r_o^u+1/B^{c_3})\Leftarrow \\
m\ge (\pi /r_o^l)\cdot(1/8B^2+2B^2(6+1+\rmin)^2-\rmin+r_o^u+1)~~~~~
\end{align*}

\begin{figure}[tb]
  \centering
  \includegraphics{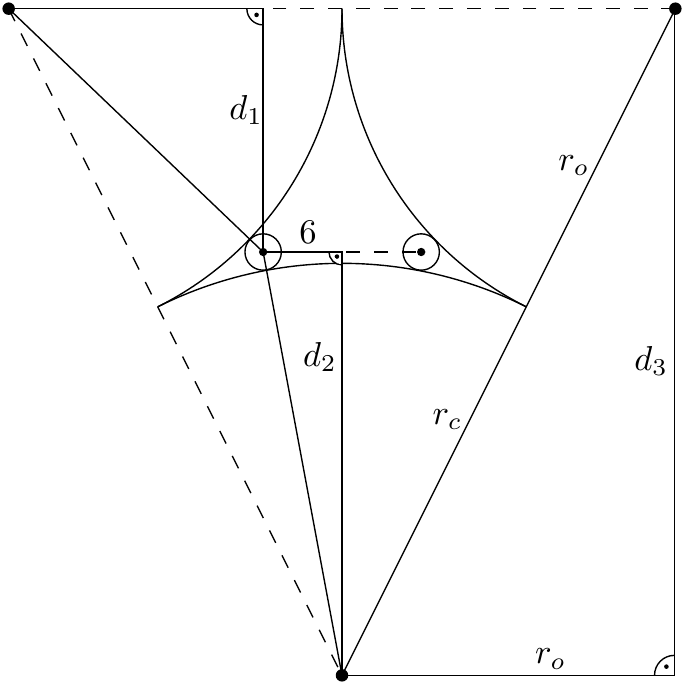}
  \caption{Determining the outer disk radius by utilizing the Pythagorean Theorem.}
  \label{fig:theom:DTRstarNPH:approx05}
\end{figure}

\textbf{Choosing the radii}.  For a tight packing of the outer disks
around the central disk, the radius of the central disk can be
described as~$r_c=r_o/\mathrm{sin}(\pi /m)-r_o$. By using the
Pythagorean Theorem, we obtain the following (see
Fig.~\ref{fig:theom:DTRstarNPH:approx05}).
\begin{align*}
\sqrt{(r_o/\sin(\pi/m))^2-r_o^2}=d_3=d_1+d_2=\\\sqrt{(r_o+\rmin)^2-(r_o-6-\rmin)^2}+\sqrt{(r_o/\sin(\pi/m)-r_o+\rmin)^2-(6+\rmin)^2}
\end{align*}
By keeping in mind that a lower bound for~$r_o$ is~$r_o^l=6$ we can determine a unique solution:
\begin{align*}
r_o=\frac{\rmin\sin(\pi/m)-3\rmin-6-2\sqrt{2\rmin^2+6\rmin-2\rmin^2\sin^2(\pi/m)-6\rmin\sin^2(\pi/m)}}{\sin(\pi/m)-1}
\end{align*}

Precise computation of this formula can take a superpolynomial amount
of time. We will show later how to compute suitable approximations
$\roapprox$ and~$\rcapprox$, such that $\roexact < \roapprox \le
\roexact +\eps_3$ and~$\roapprox /\sin(\pi /m)-\roapprox <\rcapprox
\le \roapprox /\sin(\pi /m)-\roapprox+\eps_4$.

First, let us consider a tight packing of the outer disks with
approximated outer disk radius~$\roapprox$, $\roexact < \roapprox \leq
\roexact + \eps_3$ and the corresponding precise central radius
$\rcexact = \roapprox /\sin(\pi /m)-\roapprox$. Note,
that for these radii the maximum possible distance between the separator disks in a gap increases to~$12 +
\eps_s$ for some~$\eps_s > 0$.

\begin{figure}[tb]
  \centering
  \includegraphics{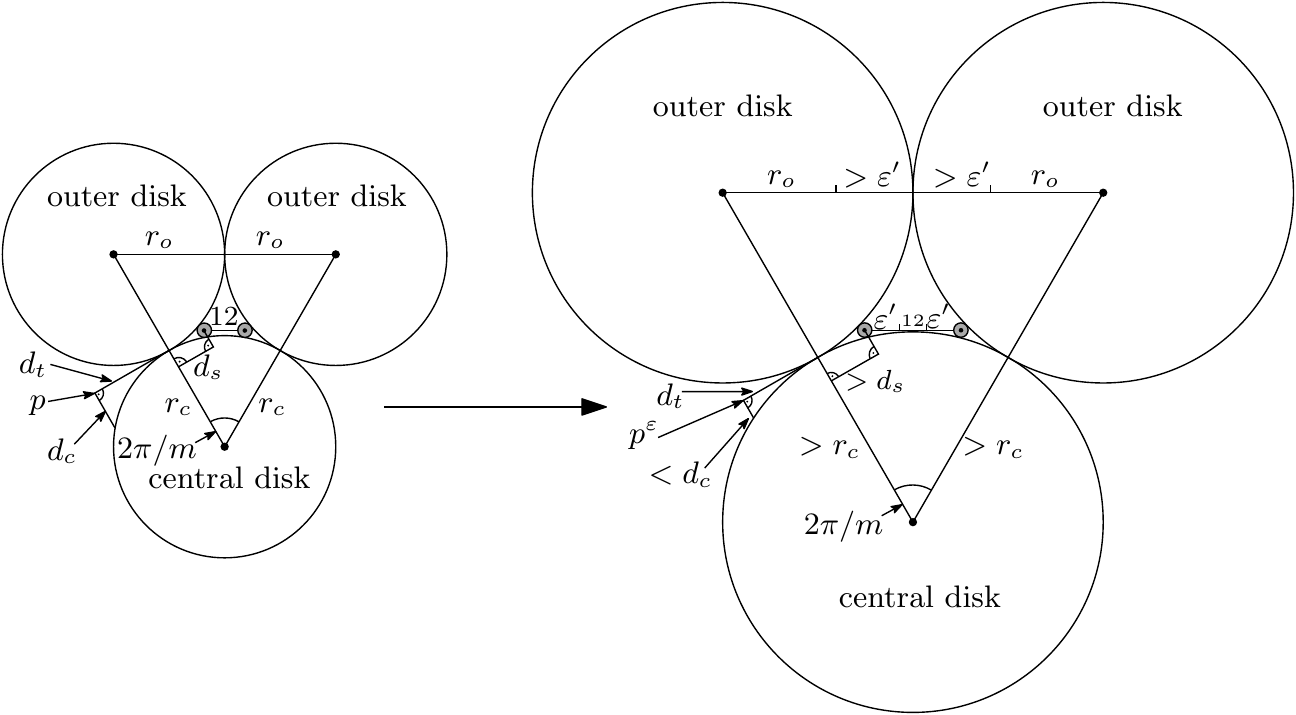}
  \caption{Increasing the separator distance by~$2\varepsilon'$
    increases the outer disks' radii by at least~$\varepsilon'$.}
  \label{fig:theom:DTRstarNPH:approx01}
\end{figure}

The left part of Fig.~\ref{fig:theom:DTRstarNPH:approx01}
illustrates two outer disks bounding a gap with the original radii and
the original separator distance of~$12$. Changing this distance
to~$12+2\varepsilon'$ (as depicted in the right part of
Fig.~\ref{fig:theom:DTRstarNPH:approx01}) increases the required
outer and central radii for a tight packing since we do not change~$m$
and, therefore, maintain the angle between the two outer disks. In
both packings, consider the tangent line between the respective left
outer disk and the central disk. We travel a distance~$d_t$ along
these tangent lines and arrive at points~$p$ and~$p^\varepsilon$ (see
Fig.~\ref{fig:theom:DTRstarNPH:approx01}). From these points we
travel orthogonally (to the tangents) until we reach the central
disk. Let~$d_c$ be the distance traveled in the original packing and
observe that the traveled distance in the modified packing is smaller
than~$d_c$ since the radius of the central disk is larger. On an
intuitive level, this means that the funnel-shaped regions next to the
tangent points become more narrow as the radii of the outer and
central disks increase. This phenomenon causes separator disks (which
maintain their original size) in the modified packing to be pushed
farther away from the lines that connect the centers of the central
and outer disks than in the original packing (the distance~$d_s$ in
Fig.~\ref{fig:theom:DTRstarNPH:approx01} increases). For this
reason, increasing the separator distance from~$12$
to~$12+2\varepsilon'$ pushes the centers of the outer disks at least
distance~$\varepsilon'$ to the sides since this is also the distance
that the separators move to the left or right respectively. We can
conclude that increasing the separator distance by~$2\varepsilon'$
increases the radius of the outer disks in a tight packing by at
least~$\varepsilon'$. The implication is that in a tight packing with
outer disk radius~$\roapprox \le \roexact+\eps_3$ and a corresponding
(precise) central radius~$\rcexact =\roapprox /\sin(\pi /m)-\roapprox$
the distance between the separators is at most~$12+2\varepsilon_3$.

\begin{figure}[tbp]
  \centering
   \includegraphics[width=0.45\textwidth]{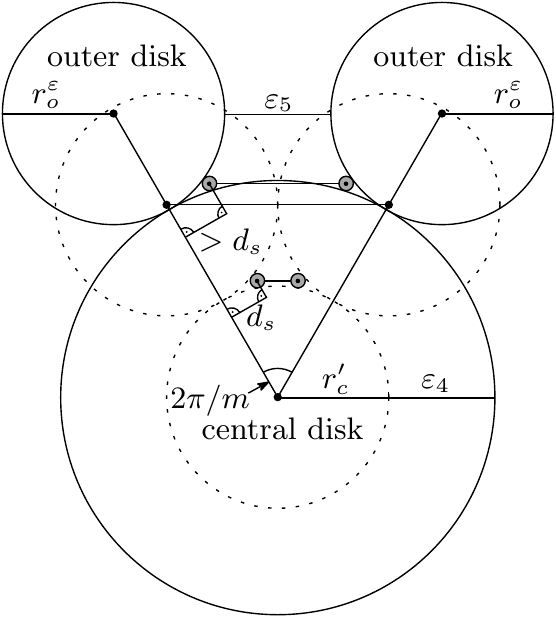}
   \caption{Increasing the central radius by~$\varepsilon_4$ creates a
     distance of~$\varepsilon_5$ between the outer disks. The distance
     between the separators increases by at most~$\varepsilon_5$.}
  \label{fig:theom:DTRstarNPH:approx02}
\end{figure}

Once again, we might be unable to compute the central
radius~$\rcexact$ precisely. Instead, we approximate it as~$\rcapprox$
with~$\rcexact < \rcapprox \le \rcexact+\eps_4= \roapprox /\sin(\pi
/m)-\roapprox+\eps_4$, which basically pushes the outer disks to the
outside as depicted in
Fig.~\ref{fig:theom:DTRstarNPH:approx02}. Assuming that the outer
disks can not deviate from these positions, this creates some
distance~$\varepsilon_5$ between the outer disks in each gap. The
outer disk radius remains~$\roapprox$ but the central disk radius is
larger than in a tight packing and has the value~$\rcapprox >
\rcexact$. Like in the argument in the previous paragraph, this causes
the separator disks to be pushed away from the lines that connect the
outer and central disks' centers. For this reason, the distance
between the separators increases by at most~$\varepsilon_5$ from at
most~$12+2\varepsilon_3$ to at most~$12+2\varepsilon_3+\varepsilon_5$.

\begin{figure}[htb]
  \centering
  \includegraphics{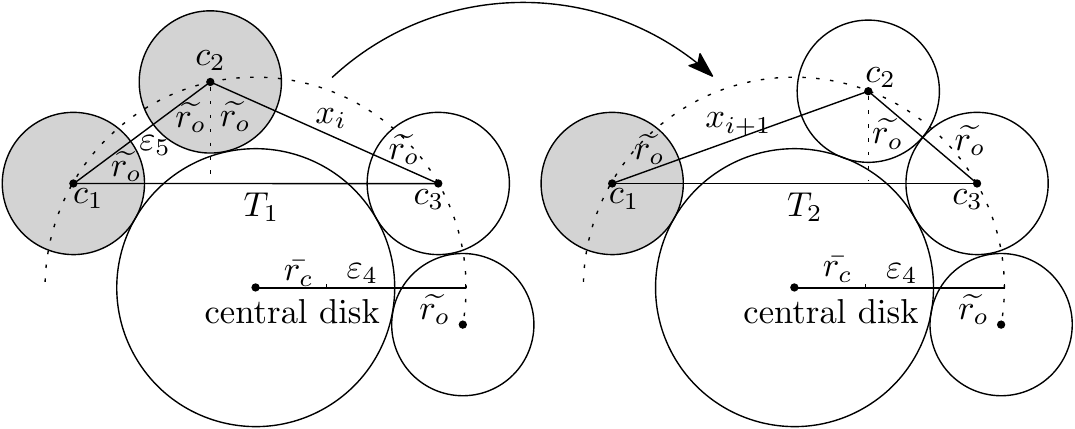}
  \caption{When allowing~$i+1$ instead of~$i$ outer disks  to move (non-movable outer disks in gray), the maximum distance increases at most linear in~$\varepsilon_5$.}
  \label{fig:theom:DTRstarNPH:approx03}
\end{figure}

So far, we have assumed that the outer disks can not deviate from
their positions even though they are placed distance~$\eps_5$ apart
from each other. In reality, however, the outer disks can rotate
around the central disk and, therefore, the distance between two outer
disks can increase to some value~$\eps_6>\eps_5$. We prove
that~$\eps_6<2m\eps_5$ by showing by induction that if we allow~$i$ of
the outer disks to move, the maximum distance~$x_i$ between two outer
disks is smaller than~$2(i+1)\varepsilon_5$ for any~$0\le i\le
m-1$. Clearly this holds true for~$x_0=\eps_5<2\eps_5$. Now assume
that our hypothesis is true for some fixed~$i\le m-2$. Clearly, the
distance~$x_i$ is maximized when we place all of the~$i$ movable disks
close together and thereby create one large gap. One of the
neighboring gaps is bounded by two non-movable (in step~$i$) disks
such that the distance between these disks is~$\eps_5$ as depicted in
the left part of Fig.~\ref{fig:theom:DTRstarNPH:approx03}. The
distance~$x_{i+1}$ gets maximized by now allowing the previously
non-movable disk next to the~$x_i$ gap to move such that the two gaps
merge as illustrated in the right part of
Fig.~\ref{fig:theom:DTRstarNPH:approx03}. Consider the
triangles~$T_1$ (left) and~$T_2$ (right) formed by the
centers~$c_1,c_2,c_3$ in
Fig.~\ref{fig:theom:DTRstarNPH:approx03}. The base side of these
triangles is identical, the height of~$T_2$ is smaller than the height
of~$T_1$ and the circumcircle of both triangles has
radius~$\rcexact+\eps_4+\roapprox$. We can conclude that the area
of~$T_2$ is smaller than the area of~$T_1$. Recalling that the area of
a triangle~$T$ with sides~$a,b,c$ can be described as~$abc/(4r)$
where~$r$ is the radius of the circumcircle of~$T$, we
obtain~$2\roapprox (2\roapprox +x_{i+1})<(2\roapprox
+\eps_5)(2\roapprox +x_i)$ and,
hence,~$x_{i+1}<x_i+\eps_5+\eps_5x_i/(2\roapprox
)<2(i+1)\eps_5+\eps_5+\eps_5(2(i+1)\eps_5)/(2\roapprox )$ by our
induction hypothesis. By choosing~$\eps_4$ and, therefore,~$\eps_5$
such that~$2m\eps_5<1$ we obtain
that~$x_{i+1}<2(i+1)\eps_5+\eps_5+\eps_5/(2\roapprox
)<2(i+1)\eps_5+\eps_5+\eps_5=2(i+2)\eps_5$, which concludes our
induction proof. Thus, the maximum distance between two outer disks
increases to at most~$\eps_6<2m\eps_5$. This increases the maximum
distance between the separators to at most~$12+2\eps_3+2m\eps_5$ as
illustrated in Fig.~\ref{fig:theom:DTRstarNPH:approx04}.

\begin{figure}[tbp]
  \centering
   \includegraphics[width=0.5\textwidth]{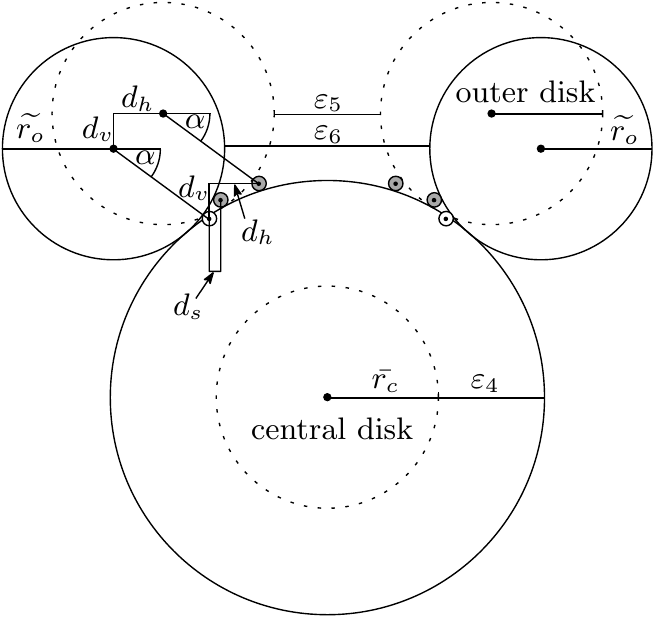}
   \caption{Increasing the distance between two outer disks
     from~$\varepsilon_5$ to~$\varepsilon_6$ increases the distance
     between the separators by at most~$\varepsilon_6-\varepsilon_5$
     since~$d_s\ge 0$.}
  \label{fig:theom:DTRstarNPH:approx04}
\end{figure}

With basic trigonometry, we determine
that~$2\roapprox+\eps_5=2(\rcexact+\eps_4+\roapprox)\sin(\pi/m)$
and~$2\roapprox=2(\rcexact+\roapprox)\sin(\pi/m)$, see
Fig.~\ref{fig:theom:DTRstarNPH:approx02}. We combine these two
equalities and obtain~$\eps_5=2\eps_4\sin(\pi/m)<2\eps_4\pi/m<2\eps_4
\cdot 4/m$. The maximum distance between two separators is, therefore,
at most~$12+2\eps_3+16\eps_4$. 

Recall that by Lemma~\ref{lem:DTRstarNPH:nonTwelve},
Properties~\ref{pr:DTRstarNPH:fitIffFeasible}
and~\ref{pr:DTRstarNPH:feasibleOnlyWithSep} hold true for our radius
function~$r$ as long as the maximum distance between two separators is
at most~$12+\eps_s$ with~$\eps_s=1/4B^2$. We can now simply
choose~$\eps_3=1/B^{c_3}$ and~$\eps_4=1/B^{c_4}$ such
that~$2\eps_3+16\eps_4 \le \eps_s$. Therefore, we can conclude that
the approximate radii for the outer and central disks suffice.

\textbf{Approximating radii in polynomial time}. It remains to argue
that we can approximate our radii as required. The formulas for the
exact radii~$\roexact$ and~$\roexact$ contain a constant number of
square root and sine operations. Recall that~$m=B^{c_1}$. Redefining
and increasing~$m$ such that~$m=2^p$ with~$2^{p-1}<B^{c_m}\le 2^p=m$
causes no issues for our construction. Therefore, by using the half-angle
formula

\begin{equation*}
  \cos \left(\frac{1}{2}x\right) = \sqrt{\frac{1+\cos x}{2}} \textnormal{ for } 0 < x < \pi,
\end{equation*}
 
and using~$\sin x = \sqrt{1 - \cos^2 x}$ for $0 < x < \pi/2$, we can
replace each sine operation in our formulas by~$p=\log_2 m$ nested
square root operations. In total, we therefore perform~$O(\log
m)=O(\log B^{c_1})$ square root operations. Individually, each square
root approximation can be performed in polynomial time using Heron's
quadratically converging method since we can easily determine constant
upper and lower bounds for each square root term and use these as the
initiation values. In order to approximate the nested square roots, we
need to increase the approximation accuracy by an according polynomial
amount. \qednew
\end{proof}

Lemma~\ref{lem:DTRstarNPH:addGaps} already showed how to construct an
equivalent 3-Partition instance with~$3m \geq 3n$ input
integers.
We now have all the tools required to prove the main result of this
section. Lemmas~\ref{lem:DTRstarNPH:addGaps}
and~\ref{lem:DTRstarNPH:chooseRad} show that the construction can be
performed in polynomial
time. Properties~\ref{pr:DTRstarNPH:fitIffFeasible}
and~\ref{pr:DTRstarNPH:feasibleOnlyWithSep} let us show that a valid
distribution of the input and separator disks among the gaps induces a
solution of the 3-Partition instance and vice versa.

\newcommand{\ThmFrStarsHardText}{The WDC graph recognition problem is
  (strongly) \NP-hard even for stars if an arbitrary embedding is allowed.}
\begin{theorem}
  \label{thm:fr-stars-hard}
  \ThmFrStarsHardText
\end{theorem}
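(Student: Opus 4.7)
My plan is to assemble a polynomial-time many-one reduction from 3-Partition to WDC recognition on stars, using the components prepared in the preceding lemmas. Given a 3-Partition instance $(\A, B)$, I will first invoke Lemma~\ref{lem:DTRstarNPH:addGaps} to replace it by an equivalent instance $(\A', B')$ with $|\A'| = 3m$, where $m = B^{c_1}$ is chosen large enough to satisfy Lemma~\ref{lem:DTRstarNPH:chooseRad}. The star $S = (V, E)$ then consists of a central vertex $v_c$ of weight $\rcapprox$ together with three families of leaves: $m$ \emph{outer} leaves of weight $\roapprox$, $2m$ \emph{separator} leaves of weight $\rmin$, and $3m$ \emph{input} leaves, one per $a \in \A'$, of weight $r(a)$. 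Lemma~\ref{lem:DTRstarNPH:chooseRad} guarantees that every weight is produced in polynomial time and that the resulting geometric parameters meet the hypotheses of Lemma~\ref{lem:DTRstarNPH:nonTwelve}, so the construction itself is polynomial.

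For the forward direction I will suppose that $(\A', B')$ admits a partition into $m$ feasible triples. I place $D_c$ at the origin, pack the $m$ outer disks tightly around it so that $m$ congruent gaps are created, drop two separators into the corners of each gap, and assign the three input disks of each partition triple to one gap. Lemma~\ref{lem:DTRstarNPH:UpperBoundary} rules out collisions between the input disks and the outer disks, and the first half of Property~\ref{pr:DTRstarNPH:fitIffFeasible} guarantees that each feasible triple fits between its two separators. Since $S$ is a star, no pair of leaves is supposed to touch, so these geometric inequalities are the only contact constraints to verify and the assignment yields a valid WDC representation.

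For the converse I will suppose that $S$ has a WDC representation $\mathcal R$ respecting the weights. Every leaf disk touches $D_c$, so the $m$ outer disks together with $D_c$ must take the near-regular arrangement analyzed in Lemma~\ref{lem:DTRstarNPH:chooseRad}, up to the slack permitted there, producing $m$ gaps whose geometry satisfies Lemma~\ref{lem:DTRstarNPH:nonTwelve}. The remaining $2m + 3m$ leaves are then distributed among these $m$ gaps. Property~\ref{pr:DTRstarNPH:feasibleOnlyWithSep} forces exactly two separators into the corners of each gap, a horizontal-space count forces exactly three input disks per gap, and the second half of Property~\ref{pr:DTRstarNPH:fitIffFeasible} forces each such triple to be feasible, giving a partition of $\A'$ into $m$ triples of sum $B'$; Lemma~\ref{lem:DTRstarNPH:addGaps} then transfers this back to $(\A, B)$. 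The step I expect to be the main obstacle is precisely this converse analysis: ruling out exotic realizations, such as a non-canonical packing of the outer disks, separators drifting out of corners, or an uneven distribution of input disks across the gaps, will require combining the wiggle-room bounds of Lemma~\ref{lem:DTRstarNPH:nonTwelve} with a careful global horizontal-space accounting across all $m$ gaps simultaneously.
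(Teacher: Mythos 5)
Your proposal is correct and follows essentially the same route as the paper's proof: the same reduction via Lemma~\ref{lem:DTRstarNPH:addGaps}, the same star construction with radii from $r$, $\rmin$, $\roapprox$, $\rcapprox$ computed via Lemma~\ref{lem:DTRstarNPH:chooseRad}, and the same use of Properties~\ref{pr:DTRstarNPH:fitIffFeasible} and~\ref{pr:DTRstarNPH:feasibleOnlyWithSep} for the two directions of the equivalence. The ``main obstacle'' you flag in the converse (ruling out non-canonical packings and uneven distributions) is exactly what the preceding lemmas on wiggle room and the two properties were set up to discharge, so the paper's proof of the theorem itself is just the short assembly you describe.
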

\begin{proof}
  Given a 3-Partition instance~$(\A,B)$, an equivalent
  instance~$(\A',B')$ as in Lemma~\ref{lem:DTRstarNPH:addGaps} and an
  equivalent WDC graph recognition instance can be constructed in
  polynomial time. The number of disks is linear in~$m$ and, thus,
  polynomial in~$B$. For the input and separator disks the radius
  computation does not cause any complications since the output of our
  radius function~$r$ is always a polynomially bounded rational
  number. For the inner and outer disks, the radii can be approximated
  in polynomial time; see
  Lemma~\ref{lem:DTRstarNPH:chooseRad}. Furthermore, the encoding size
  of the WDC graph recognition instance is polynomial in the encoding
  size of~$(\A,B)$. A solution of $(\A',B')$ induces a valid
  distribution of disks among the $m$ gaps by placing each disk triple
  together with two separators in each gap.
  Conversely, a valid distribution of the input and separator disks
  among the~$m$ gaps induces a solution of $(\A',B')$,
  since~Properties~\ref{pr:DTRstarNPH:fitIffFeasible}
  and~\ref{pr:DTRstarNPH:feasibleOnlyWithSep} ensure that each of
  the~$m$ gaps contains a feasible triple and two separators. \qednew
\end{proof}

\subsection{Recognizing embedded stars with a weighted disk contact representation}\label{sec:embeddedstars}

If, however, the order of the leaves around the central vertex of the
star is fixed, the existence of a WDC representation can be decided by
iteratively placing the outer disks $D_1$, \ldots, $D_{n-1}$ tightly
around the central
disk~$D_c$. %
A naive approach tests for collisions with all previously added disks and yields a total runtime of $O(n^2)$. However, in the following theorem we improve this to $O(n)$ by maintaining a list containing only disks that might be relevant in the future.

\newcommand{\ThmFrStarFeAlgoText}{On a Real RAM, for an embedded, vertex-weighted star $S$ it can be decided in linear time whether $S$ is a WDC graph. A WDC representation respecting the embedding (if one exists) can be constructed in linear time.}
\begin{theorem}
  \label{thm:fr-stars-fe-algo}
  \ThmFrStarFeAlgoText
\end{theorem}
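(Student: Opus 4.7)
The plan is to process the $n-1$ leaves in the given cyclic order and place each corresponding disk $D_i$ greedily: tangent to the central disk $D_c$ and tangent to the previously placed disk $D_{i-1}$. This determines the angular position $\alpha_i$ of each disk on the boundary of $D_c$ as the accumulation of the pairwise tangency angles $\theta^{\min}_{i-1,i}$, each of which is a constant-time closed-form function of $r_c$, $r_{i-1}$, $r_i$ obtainable from the law of cosines. A WDC representation respecting the embedding exists if and only if the greedy placement yields (a) a total angular extent of at most $2\pi$, i.e.\ the cycle closes without overlap between $D_{n-1}$ and $D_1$, and (b) no overlap between any other pair $D_i, D_j$. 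Restricting attention to the greedy tight placement is safe because this choice minimises $\alpha_j - \alpha_i$ for every $j>i$, so if a valid embedding-respecting representation exists at all, then it exists with these tight coordinates.

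The naive verification of condition~(b) takes $\Theta(n^2)$ time because each newly placed disk must be compared against all previous ones. To achieve linear time, the plan is to maintain a stack $L$ of previously placed disks that \emph{might} still overlap some disk placed later in the cycle. When processing $D_i$, we walk $L$ from its top downwards, at each step performing one of three actions: verify that $D_i$ is disjoint from the current top $D_j$ of $L$ and continue; remove $D_j$ from $L$ because it is now shielded by the remaining entries of $L$ and by $D_i$ from any future disk; or terminate the scan because the current top is itself a ``barrier'' behind which the rest of $L$ is safely hidden from any disk placed later. After the scan, $D_i$ is pushed onto $L$, and after the last leaf we perform one final wrap-around scan against $L$ to account for the cyclic closure to $D_1$. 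Since each disk is pushed exactly once and popped at most once, the total stack work over all placements is $O(n)$ on the Real RAM.

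The main obstacle is the design of the shielding and barrier predicates that govern the pops and the scan termination. These predicates must be computable from $O(1)$ recent entries of $L$, yet they must be uniform over all possible future radii, since the radius of a yet-to-be-placed disk influences which pairs of earlier disks are in danger of colliding. The key geometric fact to exploit is the monotonicity of $\theta^{\min}_{ij}$ in both $r_i$ and $r_j$, combined with the additivity of the angular positions $\alpha_i$: if a disk $D_j$ is followed on $L$ by entries that together cover its angular ``shadow cone'' as seen from $D_c$, then any later disk that would collide with $D_j$ must first collide with one of those subsequent entries. Formalising this dominance relation, and verifying that the resulting local rules correctly maintain the invariant that $L$ contains precisely the still-dangerous previously placed disks, is the technical heart of the argument. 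Once these predicates are in place, the decision algorithm is immediate, and the construction algorithm outputs the disk centres $((r_c+r_i)\cos\alpha_i,(r_c+r_i)\sin\alpha_i)$ with no additional overhead.
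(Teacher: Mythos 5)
Your overall architecture is the same as the paper's: place the disks greedily tangent to $D_c$ and to their predecessor (which is safe by the monotonicity argument you give), and maintain a list $L$ of previously placed disks that could still collide with a future disk, with an amortized push/pop analysis. However, there is a genuine gap: you explicitly defer the design of the shielding and barrier predicates, and that is precisely the content of the proof. The paper's resolution is much simpler than the general ``shadow cone'' dominance relation you sketch: it starts the insertion at the \emph{largest} outer disk $D_1$, which guarantees that $L$ stays sorted by non-increasing radius; when inserting $D_{i+1}$ it scans $L$ backwards only until it meets the first disk $D_j$ with $r_j \ge r_{i+1}$, and then deletes every traversed disk (all of which are strictly smaller than $D_{i+1}$ and hence shielded by it from all future disks). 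The barrier test is a single radius comparison, no geometric cone computation is needed, and starting at the largest disk also disposes of your separate wrap-around pass, since $D_1$ is permanently the head of $L$ and the clockwise collision with it is checked during each insertion.

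The second, related problem is that your amortized bound does not follow from the rules as you state them. You allow a scan step that ``verifies that $D_i$ is disjoint from the current top $D_j$ of $L$ and continues'' without popping $D_j$; if that step can occur more than a constant number of times per insertion, the total work is no longer $O(n)$, because traversed-but-retained disks can be re-traversed by every later insertion. The paper's invariant (sortedness of $L$ by radius) is exactly what rules this out: every disk traversed during an insertion, except the single terminating barrier, is removed from $L$, so each disk is charged one push and at most one pop. Without pinning down a predicate with this property, both the correctness invariant (``$L$ contains precisely the still-dangerous disks'') and the linear running time remain unproved.
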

\begin{proof}
Let $r_i$ be the radius of~$D_i$, and assume that~$D_1$ is the largest
outer disk. Then, $D_2$ can be placed next to $D_1$ clockwise. Suppose
we have already added $D_2$, \ldots, $D_i$. As depicted in
Fig.~\ref{fig:fr-stars-fe-algo}, tightly placing $D_{i+1}$ next
to~$D_i$ might cause $D_{i+1}$ to intersect with a disk inserted
earlier, even with~$D_1$. Testing for collisions with all previously
added disks yields a total runtime of $O(n^2)$; we improve this
to~$O(n)$ by keeping a list~$L$ of all inserted disks that might be
relevant for future insertions. Initially, only~$D_1$ is in~$L$. We
shall see that~$L$ remains sorted by non-increasing radius.

\begin{figure}[t]
  \centering
  \includegraphics[width=0.45\textwidth]{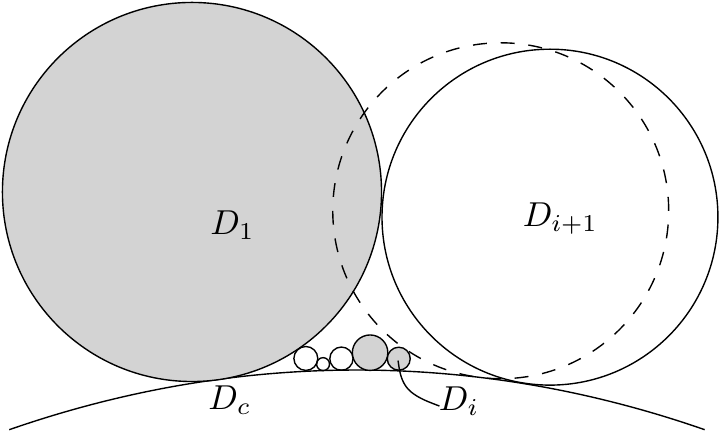}
  \caption{Deciding existence for
    Theorem~\ref{thm:fr-stars-fe-algo}. Gray disks are in~$L$ before
    inserting~$D_{i+1}$. After that, the two small gray disks will be
    removed from~$L$. }
  \label{fig:fr-stars-fe-algo}
\end{figure}

When inserting~$D_{i+1}$, we traverse~$L$ backwards and test for
collisions with traversed disks, until we find the largest index $j<i$
such that~$r_j \in L$ and~$r_{i+1} \leq r_j$. Next, we place~$D_{i+1}$
tightly next to all inserted disks, avoiding collisions with the
traversed disks.

First, note that~$D_{i+1}$ cannot intersect disks preceding~$D_j$
in~$L$ (unless $D_{i+1}$ and~$D_1$ would intersect clockwise, in which
case we report non-existence). Next, disks that currently
succeed~$D_j$ in~$L$ will not be able to intersect~$D_{i+2}$, \ldots,
$D_{n-1}$ and are therefore removed from~$L$. Finally, we
add~$D_{i+1}$ to the end of~$L$. Since all but one traversed disks are
removed during each insertion, the total runtime is~$O(n)$. We return
the constructed WDC representation if we can insert all disks tightly and there is still space left; otherwise we report non-existence.\qednew
\end{proof}

{\small 
   \bibliographystyle{titto-lncs-01}
   \bibliography{abbrv,bib}

}

\end{document}